\newcommand{\bfu}{\mathbf{u}}
\newcommand{\bfx}{\mathbf{x}}
\newcommand{\bfv}{\mathbf{v}}
\newcommand{\bfa}{\mathbf{a}}
\newcommand{\bfb}{\mathbf{b}}
\newcommand{\bfZ}{\mathbb{Z}}
\newcommand{\poly}{\mathrm{poly}}
\newtheorem{observation}{Observation}
\newtheorem{theorem}{Theorem}
\newtheorem{corollary}{Corollary}
\newtheorem{definition}{Definition}
\newtheorem{lemma}{Lemma}
\begin{document}
\title{Extending Classically Simulatable Bounds of Clifford Circuits with Nonstabilizer States via Framed Wigner Functions}
\author{Guedong Park}
\affiliation{NextQuantum and Department of Physics and Astronomy, Seoul National University, Seoul, 08826, Korea}

\author{Hyukjoon Kwon}
\email{hjkwon@kias.re.kr}
\affiliation{School of Computational Sciences, Korea Institute for Advanced Study, Seoul, 02455, Korea}
\author{Hyunseok Jeong}
\email{jeongh@snu.ac.kr}
\affiliation{NextQuantum and Department of Physics and Astronomy, Seoul National University, Seoul, 08826, Korea}

\begin{abstract}
The Wigner function formalism has played a pivotal role in examining the non-classical aspects of quantum states and their classical simulatability. Nevertheless, its application in qubit systems faces limitations due to negativity induced by Clifford gates. In this work, we propose a novel classical simulation method for qubit Clifford circuits based on the framed Wigner function, an extended form of the Wigner function with an additional phase degree of freedom. In our framework, Clifford gates do not induce negativity by switching to a suitable frame; thereby, a wide class of nonstabilizer states can be represented positively. By leveraging this technique, we show that some marginal outcomes of Clifford circuits with nonstabilizer state inputs can be efficiently sampled at polynomial time and memory costs. We develop a graph-theoretical approach to identify classically simulatable marginal outcomes and apply it to $\log$-depth random Clifford circuits. We also present the outcome probability estimation scheme using the framed Wigner function and discuss its precision. Our approach opens new avenues for utilizing quasi-probabilities to explore classically simulatable quantum circuits.
\end{abstract}
\maketitle

Applying quantum mechanical principles to computer science has led to the discovery of quantum algorithms~\cite{shor1999, deutsch1992}. However, not every quantum algorithm manifests exponential speedup over classical algorithms as some quantum circuits can be efficiently simulated classically \cite{gottesman1998, nielsen2001, jozsa2013}. The best-known class of such quantum circuits is defined by the Gottesman-Knill theorem \cite{gottesman1998, aronson2004}; circuits consist of an input state in the computational basis and Clifford gates, resulting in a stabilizer state as an output. While the theorem identifies necessary elements for universal quantum computation~\cite{bravyi2005,howard2017}, understanding the hardness of a more restrictive family of experimentally feasible quantum circuits, such as instantaneous quantum polynomial circuits~\cite{bremner10}, quantum random circuits~\cite{bouland2019}, and unitary Clifford circuits with nonstabilizer inputs~\cite{jozsa2013, bu2019, yoganathan2022}, also plays an important role in the near-term demonstration of quantum computational advantage.

Meanwhile, in a physics-oriented direction, the classical simulatability of quantum circuits has been studied based on the Wigner function \cite{wigner1932}, which describes quantum phase space. In quantum optics, Gaussian states are the only pure states with a positive Wigner function~\cite{soto1983}, and Gaussian operations preserve the positivity of the Wigner function. A remarkable similarity can be found in discrete variable quantum phase space with odd-prime dimensions \cite{gross2006}, where stabilizer states and Clifford operations correspond to Gaussian states and Gaussian operations, respectively. This common feature enables the unified construction~\cite{mari2012} of a classical simulation method for both discrete~\cite{veitch2012} and continuous~\cite{veitch2013,keshari2016} variable quantum circuits with positive Wigner functions. Moreover, these approaches open up the possibility to simulate nonstabilizer mixed states with positive Wigner functions~\cite{veitch2012}.

For a qubit system, however, classical simulation based on the Wigner function formalism is no longer applicable due to its exotic features~\cite{kocia2017,raussendorf2017}. A crucial problem arises as Clifford operations can induce negativity in the Wigner function~\cite{kocia2017, Koukoulekidis2022}, which prohibits the classical simulation of most qubit Clifford circuits. Despite considerable efforts to address this problem~\cite{kocia2017, raussendorf2020, zurel2020, zurel2023}, these approaches have their own limitations. For example, the positive representation of the nonstabilizer state may not be fully identified \cite{kocia2017,zurel2020} or phase space points may not be efficiently sampled and tracked~\cite{raussendorf2020,zurel2020,zurel2023}. Thus, constructing a qubit Wigner function formalism that behaves well under Clifford operations and developing classical simulation algorithms based on it remain open problems.

In this Letter, we propose an efficient classical simulation algorithm for unitary Clifford circuits with nonstabilizer inputs based on the qubit Wigner function formalism. To this end, we adopt a generalized notion of the Wigner function parameterized by a family of frame functions~\cite{raussendorf2017, raussendorf2020, Koukoulekidis2022}, while actively utilizing its non-local form. Our key observation is that phase space points transform covariantly under qubit Clifford gates without inducing negativity when the frame is appropriately switched. This leads to a qubit Wigner function formalism consistent with the Gottesman-Knill theorem and also significantly extends the classically simulatable regime of Clifford circuits with nonstabilizer inputs, allowing for efficient sampling of their marginal outcomes. We show that the efficiently simulatable marginal outcomes can be identified by solving the vertex cover problem~\cite{clarkson1983} in graph theory. For $\log$-depth random Clifford circuits, we find that the number of simulatable qubits scales linearly with $n$ for a 1D architecture, while observing a phase transition in its scaling for a completely connected architecture. We also discuss the precision of probability estimation \cite{pashayan2015, Koukoulekidis2022} using our approach.

\textit{Simulating quantum circuits in phase space.---}Suppose a quantum circuit, consisting of initial state $\rho$, unitary evolution $U$, and measurement operators $\{ \Pi_\bfx \}$ with outcomes $\bfx$. One way to classically simulate the outcomes $\bfx$ following the probabilities $p(\bfx) = {\rm Tr} [U \rho U^\dagger \Pi_\bfx]$ is by representing quantum states in phase space~\cite{mari2012,veitch2012,pashayan2015}. This can be done by mapping a quantum state to a distribution in phase space $V$ as $\rho=\sum_{\bfu \in V}  W_{\rho}(\bfu)A(\bfu)$, where $A(\mathbf{u})$ are phase space operators composing an operator basis. The outcome probability is then represented as
 \begin{equation}
 \label{eq:prob_1}
p(\bfx) = \sum_{\bfu \in V} W_{U \rho U^\dagger} (\bfu) P(\bfx|\bfu),
\end{equation}
where $P(\bfx|\bfu) \equiv {\rm Tr} [ A(\bfu) \Pi_{\bfx}]$. The phase space distribution $W_\rho(\bfu)$ is well-normalized but can have negative values, in general, often referred to as \emph{quasi-probability}~\cite{wigner1932}. Nevertheless, some quantum circuits may have both positive $W_{U \rho U^\dagger} (\bfu)$ and $P(\bfx|\bfu)$. In this case, the outcomes $\bfx$ can be classically simulated by sampling the phase space point $\bfu$ from  $W_{U \rho U^\dagger} (\bfu)$, followed by sampling $\bfx$ from the conditional probability distribution $P(\bfx|\bfu)$~\cite{mari2012}. 

The canonical form of phase space for an $n$-qubit system is given by a $2^n \times 2^n$ lattice, $\bfu = (\bfu_x, \bfu_z) = (u_{1x}, u_{2x}, \cdots, u_{nx}, u_{1z}, u_{2z}, \cdots, u_{nz}) \in V_n = \bfZ_2^n \times \bfZ_2^n$, where the phase space operator $A(\mathbf{u})\equiv\frac{1}{2^{n}}\sum_{\mathbf{a}\in V_n} (-1)^{[\mathbf{u},\mathbf{a}]} T_{\mathbf{a}}$ is constructed from the Weyl operator $T_{\bfa} \equiv \bigotimes_{i=1}^n i^{a_{ix} a_{iz}} X^{a_{ix}} Z^{a_{iz}} $. Here, $[\bfu, \bfa] \equiv \bfu_x \cdot \bfa_z + \bfu_z \cdot \bfa_x$ is a symplectic inner product and $X$ and $Z$ are Pauli operators corresponding to shift and boost operations in phase space, respectively. This leads to the \emph{discrete Wigner function} defined as $W_{\rho}(\bfu)\equiv \frac{1}{2^n} {\rm Tr}(\rho A(\bfu))$~\cite{raussendorf2017, pashayan2015, Koukoulekidis2022}, which shares common properties with the continuous-variable Wigner function \cite{wigner1932, raussendorf2017}:
(i) it is real-valued and well-normalized, $\sum_{\bfu \in V_n}W_{\rho}(\bfu)=1$; (ii) it is covariant under translation, $W_{T_{\bfa}\rho T^{\dagger}_{\bfa}}(\bfu)=W_{\rho}(\bfu+\bfa)$; and (iii) its marginals indicate correct outcome probabilities.

On the other hand, the qubit Wigner function has exotic properties compared with the odd prime dimensional or continuous variable Wigner function~\cite{veitch2012,mari2012} that Clifford operations can induce negativity~\cite{raussendorf2017}. For example, a two-qubit state $\ket{00}$ in the computational basis state has a positive Wigner function, but after applying the Hadamard and CNOT gates, the resulting Bell state $\ket\Phi = \frac{\ket{00} + \ket{11}}{\sqrt{2}}$ has a negative Wigner function. This strongly limits the range of classically simulatable quantum circuits using qubit Wigner functions~\cite{kocia2017gk}.

\begin{figure}[t]
    \centering
    \includegraphics[width=1\linewidth]{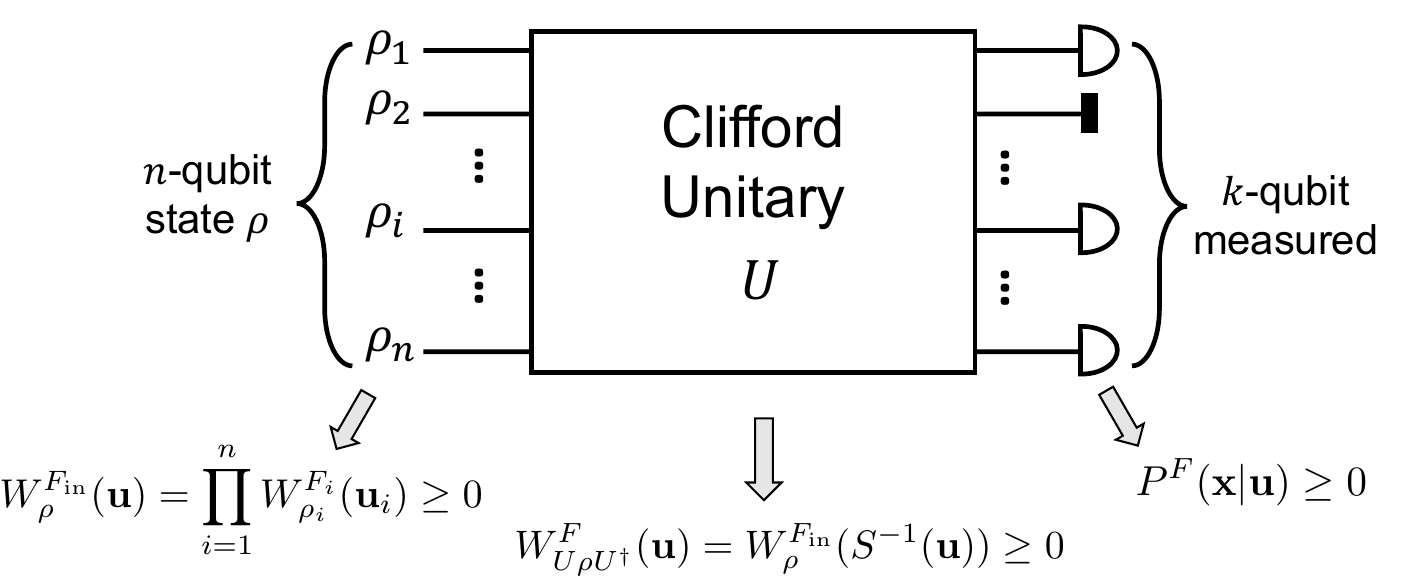}
    \caption{Condition for efficient classical simulation of a Clifford circuit with a nonstabilizer input using the framed Wigner function.}
    \label{fig:fig1}
\end{figure}

\textit{Efficient classical simulation via framed Wigner functions---}
To circumvent this problem, we introduce an additional degree of freedom to the qubit Wigner function~\cite{raussendorf2017,raussendorf2020,Koukoulekidis2022}. We define a \emph{framed Wigner function}
\begin{equation}
        W_{\rho}^{F}(\bfu) \equiv \frac{1}{2^n} {\rm Tr} \left[ \rho A^{F}(\bfu) \right],
\end{equation}
using a \emph{frame function} $F$ that characterizes an additional phase factor of phase space operators,
\begin{equation}
    A^F(\mathbf{u}) \equiv\frac{1}{2^n}\sum_{\mathbf{a}\in V_n} (-1)^{[\mathbf{u},\mathbf{a}]+F(\bfa)}T_{\mathbf{a}}.
\end{equation}
The frame function maps a phase space point to a binary value, i.e., $F(\bfa) \in \{0,1\}$ for $\bfa \in V_n$, and satisfies the condition $F(\mathbf{0})=0$ to ensure that $W_\rho^F$ is well-normalized.
For example, a single qubit has two choices of frames, $F(a_x,a_z) = 0$ (zero frame) or $F(a_x,a_z) = a_x a_x$ (conjugate frame), up to translation symmetry under $T_{(a_x, a_z)}$~\cite{raussendorf2017}.
The framed Wigner function also respects some other properties of the conventional Wigner function, a special case with $F(\bfa) = 0$, such as covariance under translation and having a proper notion of marginals~\cite{Gibbons04, gross2006, raussendorf2017}.

In terms of the framed Wigner function, the probability in Eq.~\eqref{eq:prob_1} can be rewritten as
 \begin{equation}
 \label{eq:frame_p}
p(\bfx) = \sum_{\bfu \in V_n} W_{U \rho U^\dagger}^{F} (\bfu) P^{F}(\bfx| \bfu),
\end{equation}
where $P^F(\bfx|\bfu) \equiv {\rm Tr}[ A^F(\bfu) \Pi_{\bfx}]$. Consequently, the circuit is classically simulatable when both $W^F_{U \rho U^\dagger}(\bfu)$ and $P^F(\bfx|\bfu)$ are positively represented. Our main result states that the additional degree of freedom given by the frame function (or simply frame) leads to a wider class of efficiently simulatable quantum circuits (see also Fig.~\ref{fig:fig1}):
\begin{theorem}\label{thm1}
Suppose an $n$-qubit quantum circuit composed of a product state input $\rho = \bigotimes_{i=1}^n \rho_i$ and a Clifford unitary $U$. If each $\rho_i$ is positively represented in either zero or conjugate frame, the final state $U\rho U^\dagger$ is positively represented in a frame $F$, which can be evaluated from the initial frame with $\mathcal{O}(n^3)$-memory and $\mathcal{O}({\rm poly}(n))$-time costs. From this, one can sample the measurement outcomes of some marginal qubits in the computational basis within $\mathcal{O}(n^2)$-time cost, where these marginal qubits are determined by the frame $F$.
\end{theorem}

We highlight that pure nonstabilizer input states, such as $\ket{A}= \cos (\theta/2) \ket{0} + e^{i (\pi/4)}\sin(\theta/2)\ket{1}$ with $\theta = \cos^{-1}(1/\sqrt{3})$~\cite{qassim2021}, can be positively represented by the framed Wigner function~\cite{raussendorf2017}, as well as their multiple copies such as $\ket{A}^{\otimes n}$. This can be compared with an alternative phase space structure employed in Ref.~\cite{raussendorf2020}, which maintains positivity under Clifford operations and Pauli measurements but encounters difficulties in representing multiple-copy nonstabilizer states positively. We also note that a certain phase space can positively represent any quantum circuit~\cite{zurel2020}, but this does not necessarily imply efficient classical simulation. In contrast, our classical simulation algorithm is proven to be efficient in both time and memory costs.

Compared to the approximate simulation algorithm for average cases~\cite{bu2019}, our algorithm is distinguished by its ability to simulate high Pauli-rank states such as $\ket{A}^{\otimes n}$ and to sample the outcomes following the exact probability distribution. Hence, our approach complements the existing classical simulation algorithms~\cite{raussendorf2017, raussendorf2020, bu2019} by extending a family of classically simulatable quantum circuits, despite their general hardness~\cite{jozsa2013, yoganathan2022}.

The key idea of our algorithm is to defer the negativity induced by Clifford gates, enabling the phase point update until the measurement, summarized as follows:
\begin{observation} [Clifford covariance under frame switching]
\label{obs1}
For any Clifford gate $U$ and input frame $F_{\rm in}$, one can always find a frame $F$, a symplectic transform $S$, and $\bfv \in V_n$
 such that $W_{U \rho U^\dagger}^{F}(S(\bfu) + \bfv) = W_\rho^{F_{\rm in}} (\bfu)$. Consequently, when an input state $\rho$ is positively represented, i.e., $W_\rho^{F_{\rm in}}(\bfu) \geq 0$, the output state can also be positively represented, i.e., $W_{U \rho U^\dagger}^{F}(\bfu) \geq 0$.
\end{observation}
While this stems from the fact that a Clifford gate transforms any Pauli operator to another Pauli operator up to the phase factor~\cite{aronson2004}, a detailed form of $F$ and $S$ can be found in the Supplemental Material~\cite{supple}. For example, the Bell state $\ket\Phi$, which exhibits negativity in the conventional Wigner function, can be positively represented under the frame $F(\bfa) = a_{1x}a_{1z}+a_{1z}a_{2z}a_{1x}+a_{2z}a_{1x}a_{2x} \pmod 2$. This observation can be applied to all stabilizer states and a wide class of nonstabilizer states generated from Clifford circuits. Compared to the case of a fixed framed function with adaptive Pauli measurements, where single-qubit operations are the only operations that preserve the positivity of the Wigner function~\cite{raussendorf2017}, our result shows that positivity is preserved for any unitary, i.e., non-adaptive, Clifford operations by switching the frame. 

We sketch the classical simulation algorithm described in Theorem~\ref{thm1}, starting from the phase space point sampling. The framed Wigner function of a product input state $\rho = \bigotimes_{i=1}^n \rho_i$ is written as $W^{F_{\rm in}}_\rho(\bfu) = \prod_{i=1}^n W^{F_i}_{\rho_i}(u_{ix},u_{iz})$ with $F_{\rm in}(\bfa) = \sum_{i=1}^n F_i(a_{ix}, a_{iz})$, where $F_i$ can be either zero or conjugate frame. Hence, when $W^{F_i}_{\rho_i}$ is positive for every $i=1,\cdots,n$, the phase space point of the final state following the distribution $W^{F}_{U \rho U^{\dagger}}(\bfu)=W^{F_{\rm in}}_{\rho}(S^{-1}(\bfu))$ can be sampled by i) sampling $(u_{ix},u_{iz})$ from each $W^{F_i}_{\rho_i}(u_{ix},u_{iz})$, and then ii) applying the symplectic transform $S$ to $(u_{1x},\cdots, u_{nx}, u_{1z}, \cdots, u_{nz})$ followed by adding $\bfv$ from Observation~\ref{obs1}, which overall takes $\mathcal{O}(n^2)$-time. 

On the other hand, the positive Wigner function does not suffice for efficient classical simulation of the quantum circuit. One issue is that a frame function $F(\bfa)$ may contain the high-degree monomials in $\bfa$, such as $\prod_{i=1}^n a_{ix} a_{iz}$. This gives rise to an exploding number of possible frame choices ($2^{(2^{2n}-2n-1)}$)~\cite{nisan1994}, requiring exponential memory cost. For Clifford circuits with product state inputs, however, we show that the frame functions with polynomials of degree 3, i.e., cubic, are sufficient to run the proposed protocol. In other words, the frame can be written in the form, $F(\bfa) = \sum_{\mu,\nu}c_{(\mu,\nu)} a_{\mu} a_{\nu} + \sum_{\mu,\nu,\omega}c_{(\mu, \nu, \omega)} a_{\mu}a_{\nu}a_{\omega}\pmod{2}$, where $\mu,\nu,\omega\in \left\{1x,\ldots,nx,1z,\ldots,nz\right\}$ and $c_{(\mu, \nu)},c_{(\mu, \nu, \omega)}\in \{0,1\}$. This is because any frame function for each $i$th input qubit is either $0$ or $a_{ix}a_{iz}$, and every frame transformation under a Clifford gate adds up to cubic terms~\cite{supple}. We also note that linear terms can always be absorbed in the translation under $T_\bfa$~\cite{raussendorf2017}. Thus, the possible number of frames is reduced into $2^{\mathcal{O}(n^3)}$, which can be efficiently manipulated with a $\mathcal{O}(n^3)$ bit memory.

For a given phase space point $\bfu$, the outcomes $\bfx$ can be classically sampled under the condition $P^{F}(\bfx|\bfu) \geq 0$. A sufficient condition for efficient sampling of the $n$-qubit computational basis measurement $\Pi_\bfx = \ket{\bfx}\bra{\bfx}$ with $\bfx = (x_1, \cdots, x_n) \in \bfZ_2^n$ can be found as $F(\mathbf{0}_x, \bfa_z) = 0$, which leads to $P^F(\bfx|\bfu) = \bra{\bfx} A^F (\bfu) \ket{\bfx} = (1/2^n)\sum_{ \bfa_z \in \bfZ_2^n } (-1)^{(\bfu_x + \bfx) \cdot \bfa_z +F({\bf0}_x,\bfa_z)} = \delta_{\bfx, \bfu_x}$ by noting that $\bra{\bfx} T_\bfa \ket{\bfx} = \delta_{\bfa_x,\bf0_x} (-1)^{\bfx \cdot \bfa_z}$. This directly leads to the outcome sampling of $\bfx = \bfu_x$ from the given phase space point $\bfu$. Remarkably, any Clifford circuit with a stabilizer input always has positive representation in a frame $F$ obeying this condition so that \textit{all the $n$-qubit outcomes can be efficiently simulated for stabilizer states}, reproducing the result of the Gottesman-Knill theorem~\cite{supple}.

For nonstabilizer inputs, however, it is uncommon to obtain a frame $F$ with the condition $F(\mathbf{0}_x, \bfa_z) = 0$. Nevertheless, an efficient classical simulation of some marginal outcomes is still possible. We note that after tracing out the $j$th qubit, the phase space operator of the remaining qubits is given by ${\rm Tr}_j [A^{F}(\bfu)] = A^{F'}(\bfu')$ with $\bfu' = (\bfu'_x, \bfu'_z) \in V_{n-1}$ with $\bfu'_\lambda = (u_{1\lambda}, \cdots, u_{(j-1)\lambda}, u_{(j + 1)\lambda}, \cdots, u_{n\lambda})$ for $\lambda = x,z$ and $F'(\bfa') = F(\bfa)|_{a_{jx}=a_{jz}=0}$ with $\bfa' \in V_{n-1}$ similarly defined as $\bfu'$. Hence, all the monomials containing $a_{jx}$ or $a_{jz}$ are removed in the reduced frame $F'$. We repeat this step for $(n-k)$ times until the reduced frame of the remaining $k$-qubits satisfies $F'({\bf0}'_{x},\bfa'_z)= 0$. Therefore, the measurement outcomes $\bfx' = (x_1', \cdots, x_k') \in \bfZ_2^k$ on the remaining $k$-marginal qubits $\Pi_{\bfx'} = \ket{\bfx'}_M\bra{\bfx'} \otimes \mathbb{1}_T$ can be efficiently sampled from $\bfu$ as $P^F(\bfx'|\bfu) = {\rm Tr}[A^F(\bfu) \Pi_{\bfx'}]= \langle \bfx' | A^{F'}(\bfu') | \bfx' \rangle=
\delta_{\bfx', \bfu'_x}$, where $M$ and $T$ represent the Hilbert space of the measured and traced-out qubits, respectively.
\begin{figure}[t]
    \centering
    \includegraphics[width=0.95\linewidth]{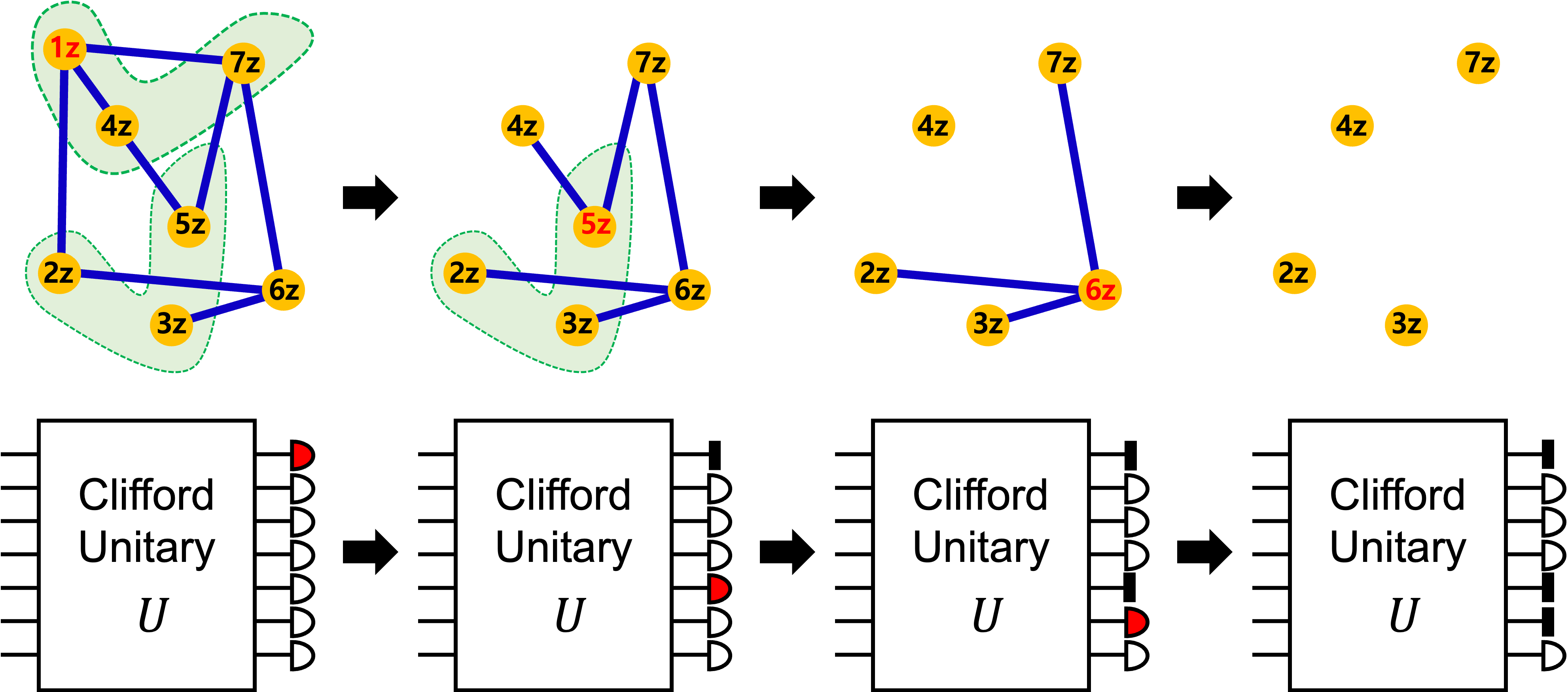}
    \caption{
    Identifying efficiently simulatable qubits by solving the vertex cover problem using a greedy algorithm. Lines and shaded regions represent the hyperedges of ${\cal E}_2$, and ${\cal E}_3$, respectively. A qubit with the largest number of connected hyperedges is traced out at each step until all the hyperedges vanishes. The outcomes of the remaining qubits are efficiently simulatable.
    }
    \label{fig:frame_tracing}
\end{figure}

Furthermore, finding efficiently simulatable marginal qubits can be translated into a graph problem by taking ${\cal V} = \left\{ a_{1z}, \cdots, a_{nz} \right\}$ as vertices and by expressing $F(\mathbf{0}_x,\bfa_z) = \sum_{(i,j) \in {\cal E}_2} a_{iz}a_{jz} + \sum_{(i,j,k)\in {\cal E}_3} a_{iz}a_{jz}a_{kz}\pmod{2}$ as hyperedges ${\cal E}_{2}$ and ${\cal E}_{3}$ connecting two and three vertices, respectively. We note that tracing out the $j$th qubit (taking $a_{jz} = 0$) corresponds to removing the vertex $a_{jz}$ along with all hyperedges connected to it. Hence, removing vertices that fully cover the hyperedges leads to the remaining $k$-qubits satisfying $F'({\bf0}_{x}', {\bfa}_z')=0$. While finding the minimum number of such vertices, the so-called vertex cover problem is an NP-hard problem~\cite{dinur2005}, there exists a sub-optimal algorithm, for example, a \emph{greedy algorithm}~\cite{clarkson1983} running in poly-time (see Fig.~\ref{fig:frame_tracing}).

\textit{Examples.---}We apply the proposed simulation algorithm to two different types of 
random $n$-qubit log-depth Clifford circuits with 1D-neighboring and arbitrary long-range (i.e., completely connected) interactions between two qubits, where gate count is $L=\alpha n \ln (n)$ (see Refs.~\cite{dalzell2022, supple} for more details). We take the input state $\rho = \ket{A}\bra{A}^{\otimes n}$ and adopt the greedy algorithm for solving the vertex cover problem. As the gate count ($\alpha$) increases, the final frame's hypergraph becomes more complex, requiring the removal of more vertices, which results in fewer efficiently simulatable marginal qubits ~\cite{supple}. Figure~\ref{fig:example}(a,c) shows that our simulation method successfully samples the measurement outcomes of these marginal qubits. 

\begin{figure}[t]
    \centering
    \includegraphics[width=0.95\linewidth]{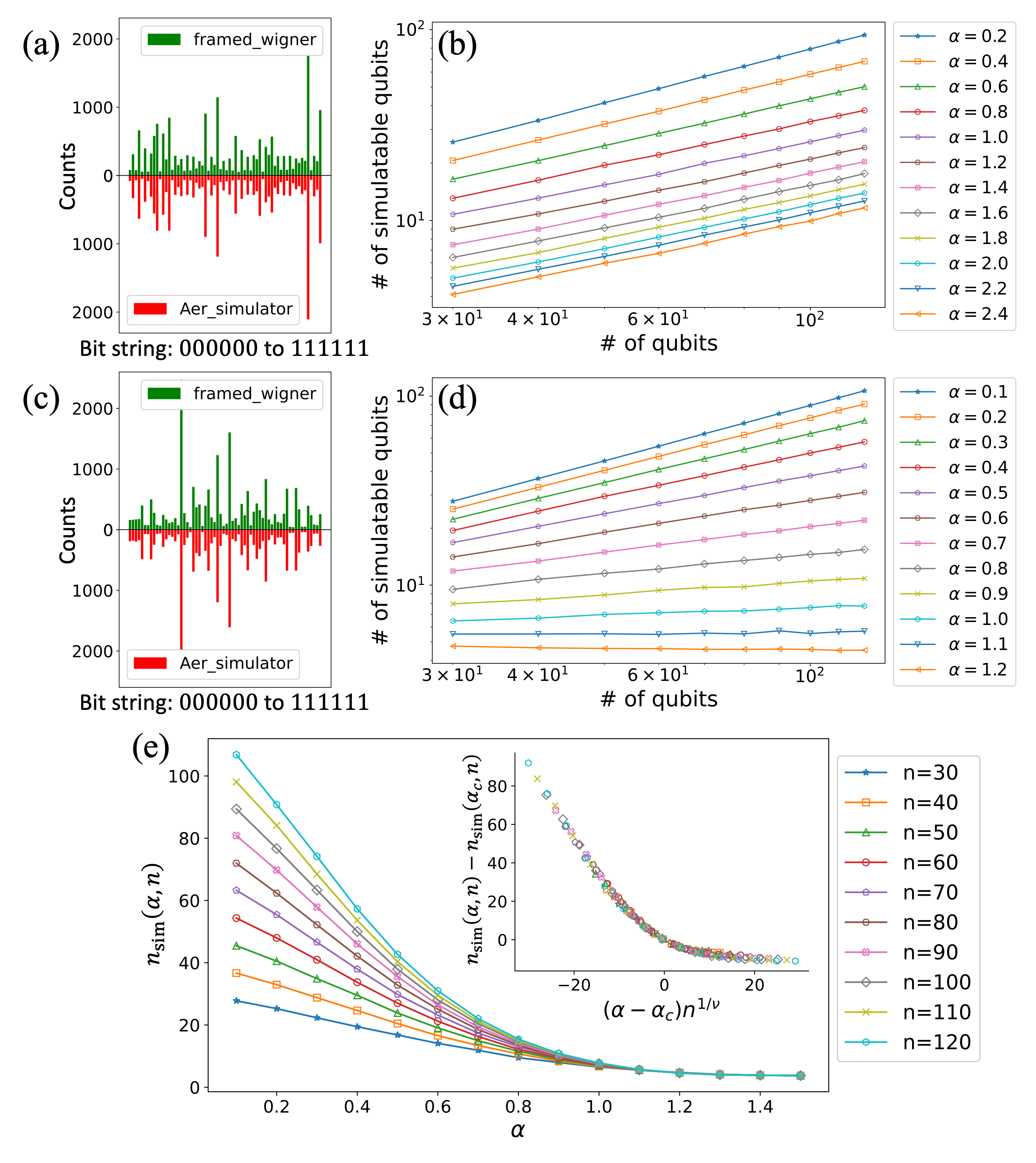}
    \caption{(a), (c): Comparison between sampling results using the framed Wigner function and Qiskit Aer simulator for a randomly chosen $10$-qubit log-depth Clifford circuit with (a) 1D-neighboring and (c) completely connected architecture. For both cases, nonstabilizer input state $\ket{A}\bra{A}^{\otimes 10}$ is taken, and $6$ marginal qubits are selected after solving the vertex cover problem. The $y$-axis shows the number of counts of binary string that simulators sampled by taking a total of $20000$ samples.
    (b), (d): The averaged number of simulatable qubits ($n_{\rm sim}$) by increasing the gate count $L = \alpha (n \ln n)$ of $1000$ random Clifford circuits with (b) 1D-neighboring and (d) completely connected architectures. (e) Scaling behavior of $n_{\rm sim}$ by increasing $\alpha$ and data collapse after finite-size scaling.}
    \label{fig:example}
\end{figure}

Figure~\ref{fig:example}(b) shows that for 1D architecture, the average number of classically simulatable qubits ($n_{\rm sim}$) increases linearly by $n$. This result can be compared to the tensor network simulation, where all the $n$-qubit outcomes of these circuits can be efficiently simulated with $e^{\mathcal{O}(\alpha \ln n)} \sim n^{\mathcal{O}(\alpha)}$ time cost~\cite{napp2022,vidal2003}. In contrast, our algorithm simulates a linear portion of qubits with $\mathcal{O}(\alpha {\rm poly}(n))$ time cost, including both evaluating the final frame $F$ and solving its vertex cover problem~\cite{supple}. Therefore, for the selected marginal outcomes, our approach offers a faster simulation than the tensor network method when the circuit depth $\alpha$ becomes large. From the numerical simulation, we observe remarkable improvement compared to the matrix product state simulator of the IBM Qiskit~\cite{supple}.

For the completely connected architecture, we observe a sharp transition of $n_{\rm sim}$ depending on the gate count. For $\alpha \leq \alpha_c$, $n_{\rm sim}$ scales linearly by increasing the number of qubits. In contrast, if the gate count exceeds a certain value $\alpha \geq \alpha_c$, we observe the sub-linear scaling of the $n_{\rm sim}$ (see Fig.~\ref{fig:example}(d)). From the finite-size scaling analysis by taking a general scaling form $n_{\rm sim}(\alpha, n) - n_{\rm sim}(\alpha_c, n) = f((\alpha-\alpha_c) n^{1/\nu})$ \cite{skinner2019}, we numerically estimate the critical value $\alpha_c \approx 0.81$ and $\nu \approx 1.28$ (see Fig.~\ref{fig:example}(e)). We conjecture that this critical phenomenon is closely related to the anti-concentration properties of the completely connected random circuit \cite{dalzell2022}, as their critical points closely align at $\alpha = 5/6$.

\textit{Born probability estimation.---} 
The framed Wigner function can also be utilized for estimating the Born probability in Eq.~\eqref{eq:frame_p} under less restrictive conditions than outcome sampling. When the final phase space point $\bfu$ can be efficiently sampled from $W_{U \rho U^\dagger}^F(\bfu) \geq 0$, one can take an estimator $\hat{p}_{\bfu}^F(\bfx) = P^F(\bfx|\bfu)$ for each $\bfu$. This leads to an unbiased estimation of the probability $p(\bfx) = \mathbb{E}_{\bfu}[\hat{p}_\bfu^F(\bfx)]$, where $\mathbb{E}_{\bfu} [\cdot]$ denotes averaging over phase space points $\bfu$ from the distribution $W_{U \rho U^\dagger}^F(\bfu)$~\cite{pashayan2015}. The estimator $\hat{p}_\bfu^F(\bfx)$ is not necessarily positive and can be efficiently calculated for every $\bfu$ when $F(\mathbf{0}_x,\bfa_z)$ is quadratic~\cite{montanaro2017, buCMP22, bravyi2019}. One can also apply a procedure similar to the sampling scheme by tracing out $(n-k)$ qubits until $F'({\bf0}_x',\bfu_z')$ becomes quadratic, resulting in an efficient estimation of the $k$-marginal outcome probability. The proposed protocol can be generalized for an arbitrary product state by mixing two different frames without affecting the degree of the reduced frame of the final state~\cite{supple}.

We analyze the precision of the proposed estimator using the mean squared error (MSE), ${\rm Var}_{\rm Wig}(\bfx')\equiv\mathbb{E}_{\bfu}[ | \hat p_{\bfu}^F(\bfx') - p(\bfx')|^2]$~\cite{jerrum1986,blair1985}. The average MSE over all possible outcomes becomes $\overline{\rm Var}_{\rm Wig} \equiv 1/2^k \sum_{\bfx' \in \bfZ_2^k} {\rm Var}_{\rm Wig}(\bfx')= (1-Z_{\rm col}^{(k)})/2^k$, where $Z_{\rm col}^{(k)} \equiv \sum_{\bfx' \in \bfZ^k_2} p(\bfx')^2 \geq 1/2^k$ is the collision probability of the $k$-marginal outcomes ~\cite{bouland2019, dalzell2022}. This improves the previous result using the estimator in terms of the Pauli operator~\cite{pashayan2020} with the average MSE of $\overline{\rm Var}_{\rm Pauli} = (1-1/2^{k}) Z_{\rm col}^{(k)} \geq \overline{\rm Var}_{\rm Wig}$~\cite{supple}.

\textit{Remarks.---}
We have constructed a classical simulation algorithm for a Clifford circuit with nonstabilizer inputs based on the framed Wigner function. Our key observation is that the phase space point can be covariantly transformed under any Clifford gate by switching the frame of the Wigner function, which significantly extends the regime of positively represented states. Our protocol offers a classically efficient sampling of marginal outcomes with efficient time and memory cost, where these outcomes can be identified by solving the vertex cover problem. As examples, we have explored $\log$-depth Clifford circuits and observed that the number of simulatable qubits behaves differently between locally and completely connected circuits.

While our approach of introducing a family of frame functions establishes a clear connection between Clifford operations and positive Wigner functions, even for a qubit system, it also leaves potential extensions and further exploration. A crucial question would be whether the proposed methods can be further extended to classically simulate non-Clifford circuits or adaptive Clifford circuits in the presence of noise.

\textit{Acknowledgements.---}
The authors thank Kyunghyun Baek for helpful discussions. This work was supported by the National Research Foundation of Korea (NRF) grants (NRF-2023R1A2C1006115, RS-2024-00413957, and RS-2024-00438415) and the Institute of Information \& Communications Technology Planning $\&$ Evaluation (IITP) grant (IITP-2021-0-01059 and IITP-2023-2020-0-01606) funded by the Korea government. H.K. is supported by the KIAS Individual Grant No. CG085301 at Korea Institute for Advanced Study.

\clearpage

\widetext
\begin{center}
\textbf{\large Supplemental Material: Extending Classically Simulatable Bounds of Clifford Circuits with Nonstabilizer States via Framed Wigner Functions}
\end{center}

\tableofcontents

\section{Proof of Theorem~1}

In this section, we prove Theorem~1 in the main text. Let us first restate Theorem~1 and clarify each step of the proof.
\begin{theorem}
    Suppose an $n$-qubit quantum circuit composed of a product state input $\rho = \bigotimes_{i=1}^n \rho_i$ and a Clifford unitary $U$. If each $\rho_i$ is positively represented in either zero or conjugate frame, the final state $U\rho U^\dagger$ is positively represented within $\mathcal{O}(n^3)$-memory and $\mathcal{O}({\rm poly}(n))$-time costs. From this, one can sample the measurement outcomes of some marginal qubits in the computational basis within $\mathcal{O}(n^2)$-time cost, where these marginal qubits are determined by the frame $F$.
\end{theorem}

To complete the proof, we separately show the following four main parts of Theorem~1. 
\begin{enumerate}
    \item The Wigner representation of a product state input via zero or conjugate frame. 

    \item Changing the positively representing frame by the Clifford operation and how the Wigner function transforms as a result. This proves Observation~1.

    \item Time and memory complexities of the proposed algorithm.

    \item The capability of classical simulation with a reduced frame after marginalization. 
    
\end{enumerate}

Additionally, we will prove that all stabilizer states can be positively represented under a specific frame so that single-qubit Pauli measurements on all qubits can be efficiently carried out via our method, hence reproducing the result of the Gottesmann-Knill theorem~\cite{aronson2004}. Finally, we will discuss the accessibility of information on sampled outcomes when the final frame has relaxed conditions of its degree.

\subsection{Wigner representation of a product state}
We discuss how a product state can be positively represented under a quadratic frame function. When we consider only a single qubit system, the frame function can have up to second degree. Therefore, there are two frame $F=0$ (\emph{zero frame}) and $F=a_{1x}a_{1z}$ (\emph{conjugate frame}). Now, suppose that for $i\in [n]$, each single qubit state $\rho_i$ is positively represented under $F_i=b_i a_{ix}a_{iz}$ with $b_i = 0$ (zero frame) or $b_i = 1$ (conjugate frame). Also, we define the Pauli operator acting on the $i$-th qubit, $T_{\bfa_i}\equiv i^{a_{ix} a_{iz}}X^{a_{ix}}Z^{a_{iz}}$. Then, we can rewrite $\rho_i$ as 
\begin{align}
    \rho_i=\frac{1}{2}\sum_{\bfu_i,\bfa_i\in \bfZ^2_2}W_{\rho_i}(\bfu_i)(-1)^{[\bfu_i,\bfa_i]+b_ia_{ix}a_{iz}}T_{\bfa_i}.
\end{align}
Consequently, we obtain the Wigner representation of a product state as follows:
\begin{align}
    \rho=\bigotimes_{i}^{n}\rho_i=\frac{1}{2^n}\sum_{\bfu,\bfa\in V_{n}}\left(\prod_{i=1}^{n}W_{\rho_i}(\bfu_i)\right)(-1)^{[\bfu,\bfa]+\sum_{i=1}^{n}b_{i}a_{ix}a_{iz}}T_{\bfa},
\end{align}
where $\bfu=\left(\bfu_1,\bfu_2,\ldots,\bfu_n\right)$, $\bfa=\left(\bfa_1,\bfa_2,\ldots,\bfa_n\right)$, and $T_{\bfa}=\bigotimes_{i=1}^{n}T_{\bfa_i}$. Note that $\rho$ is positively represented under the frame $\sum_{i=1}^{n}b_ia_{ix}a_{iz}$. Hence, the Wigner function of a product state can be represented as a product of single-qubit Wigner functions.

\subsection{Frame changing rules (Observation~1)}\label{sup1:secB}

We explain the frame-changing rules under the Clifford operations and the corresponding transformation of the Wigner function. Suppose an $n$-qubit quantum state $\rho$ is positively represented under the frame $F_{{\rm in}}$. As we discussed in the main text, after applying a Clifford unitary $U$, $U\rho U^{\dagger}$ may have negativity~\cite{raussendorf2017} when the frame is fixed. However, we note that the transformation of $T_\bfa$ under any Clifford unitary $U$ has the following form \cite{aronson2004}:
\begin{equation}\label{sup:fpauli}
    UT_{\bfa}U^\dagger = (-1)^{P(\bfa)}T_{S(\bfa)},
\end{equation}
where symplectic matrices $S$ and phase functions $P(\bfa)$ for each Clifford gate are given in Table~\ref{sup:table}. This leads to 
\begin{align}
    U \rho U^{\dagger}&=\frac{1}{2^n}\sum_{\bfu,\bfa\in V_{n}}W^{F_{{\rm in}}}_\rho (\bfu) \left(-1\right)^{[\bfu,\bfa]+F_{{\rm in}}(\bfa)}UT_{\bfa}U^\dagger \\
    &=\frac{1}{2^n}\sum_{\bfu,\bfa}W^{F_{{\rm in}}}_\rho (\bfu) \left(-1\right)^{[\bfu,\bfa]+F_{{\rm in}}(\bfa)+P(\bfa)}T_{S(\bfa)} \\
    &=\frac{1}{2^n}\sum_{\bfu,\bfa}W^{F_{{\rm in}}}_\rho (\bfu) \left(-1\right)^{[S(\bfu),\bfa]+F_{{\rm in}}(S^{-1}(\bfa))+P(S^{-1}(\bfa))} T_{\bfa} \\
    &=\frac{1}{2^n}\sum_{\bfu,\bfa}W^{F_{{\rm in}}}_\rho (S^{-1}(\bfu)) \left(-1\right)^{[\bfu,\bfa]+F_{{\rm in}}(S^{-1}(\bfa))+P(S^{-1}(\bfa))} T_{\bfa}\\
    &=\frac{1}{2^n}\sum_{\bfu,\bfa}W^{F_{{\rm in}}}_\rho (S^{-1}(\bfu)) \left(-1\right)^{[\bfu,\bfa]+F(\bfa)} T_{\bfa}\\
    &= \sum_{\bfu}W^{F_{{\rm in}}}_\rho (S^{-1}(\bfu)) A^{F}(\bfu).
    \label{sup:change}
\end{align}
\begin{table}[t]
\begin{center}
\begin{tabular}{|c|c|c|}
\hline
\textrm{Clifford gate}&
\textrm{$S$}&
\textrm{$P(\bfa)$}\\
\hline
$CNOT_{i\rightarrow j}$-gate & $\substack{a_{iz}\longleftarrow a_{iz}+a_{jz}\\
    a_{jx}\longleftarrow a_{ix}+a_{jx}}$ & $a_{jz}a_{ix}(a_{iz}+a_{jx}+1)$  \\
\hline
$H_i$-gate & $\substack{a_{ix}\longleftarrow a_{iz}\\
    a_{iz}\longleftarrow a_{ix}}$ & $a_{iz}a_{ix}$\\
\hline
$S_i$-gate & $a_{iz}\longleftarrow a_{ix}+a_{iz}$ & $a_{iz}a_{ix}$ \\
\hline
\end{tabular}\\
\caption{\label{sup:table} Symplectic transformations ($S$) of elementary Clifford gates and corresponding phase functions ($P(\bfa)$). $A\leftarrow B$ means $A$ is transformed to $B$.}
\end{center}
\end{table}
By noting that $U\rho U^\dagger = \sum_{\bfu}W^{F}_{U\rho U^\dagger}(\bfu)A^{F}(\bfu)$ in the new frame $F(\bfa) = F_{{\rm in}}(S^{-1}\bfa)+P(S^{-1}\bfa)$, we obtain $W^{F}_{U\rho U^{\dagger}}(\bfu)=W^{F_{{\rm in}}}_{\rho}(S^{-1}(\bfu))$. This proves Observation~1.

We further note that all linear terms within the frame $F(\bfa)$ can be replaced by a translation in the Wigner function. By noting that linear terms can be expressed as $[\bfv,\bfa]$ for some $\bfv\in V_{n}$, we obtain  
\begin{align}
     W^F_\rho(\bfu) &= \frac{1}{4^n} \sum_{\bfa \in V_n} (-1)^{[\bfu, \bfa] + F(\bfa)}{\rm Tr}[\rho T_\bfa]\\
     &= \frac{1}{4^n}  \sum_{\bfa \in V_n} (-1)^{[\bfu + \bfv, \bfa] + (F(\bfa) + [\bfv,\bfa])}{\rm Tr}[\rho T_\bfa]\\
     &= \frac{1}{4^n}  \sum_{\bfa \in V_n} (-1)^{[\bfu + \bfv, \bfa] + F'(\bfa)}{\rm Tr}[\rho T_\bfa]\\
     &= \frac{1}{2^n}  {\rm Tr} [\rho A^{F'}(\bfu + \bfv)]\\
     &= W^{F'}_\rho(\bfu + \bfv),
     \label{sup:linear_ig}
\end{align}
where $F'(\bfa) = F(\bfa) + [\bfv, \bfa]$.
In other words, if a frame $F$ has a linear term, then its Wigner function of $\sigma$ follows by translating the arguments of the original Wigner function.


\subsection{Time and memory complexities of frame changing}
\label{supp:sec_complexity}
We discuss the time and memory complexities of the proposed simulation algorithm. We first discuss the memory cost for storing the frame functions. Suppose we have a product state input $\rho$ which is positively represented under the frame $F_{{\rm in}}=\sum_{i=1}^{n}b_ia_{ix}a_{iz}$ with $b_i\in \left\{0,1\right\}$. We recall that the final frame after the Clifford operation has the following transformation rule, $F(\bfa)=F_{{\rm in}}(S^{-1}\bfa)+P(S^{-1}\bfa)$. Since all Clifford operations can be generated by gates in the set $\{ CNOT, H, S\}$ shown in Table.~\ref{sup:table} with phase functions of degree up to 3 and linear transformation $S^{-1}$ do not raise the degree of frame function, the resulting frame must have up to the third degree. In light of those facts, we can formalize the final frame as the following cubic binary valued polynomial,
\begin{align}\label{eq:frame_function}
    F(\bfa) = \sum_{\mu}c_{(\mu)}a_{\mu}+\sum_{\mu,\nu}c_{(\mu, \nu)} a_{\mu} a_{\nu} + \sum_{\mu,\nu,\omega}c_{(\mu,\nu,\omega)} a_{\mu}a_{\nu}a_{\omega}\pmod{2}, 
\end{align}
where $\mu,\nu,\omega\in\left\{1x,\ldots,nx,1z,\ldots,nz\right\}$. To store this frame information, we need $\mathcal{O}(n^3)$-memory to contain all coefficients of possible monomials.

We then discuss the time and memory costs for updating the frames for each Clifford gate acting on at most two qubits. With $\mathcal{O}(1)$-time and memory complexity, we can find the symplectic matrix $S$ and the phase function $P$ for this gate. Given a frame $F$ to be changed, we collect all monomials having variables in $\left\{a_{ix},a_{iz},a_{jx},a_{jz}\right\}$.  As we collect each monomial, we linearly transform it via $S^{-1}$ and obtain the set of newly generated monomials. This takes constant time. Since we need at most $\mathcal{O}(n^3)$ memory to record arbitrary cubic frame functions and the above steps do not raise the degree of the output frame, newly generated monomials can be recorded in another $\mathcal{O}(n^3)$-memory. Moreover, we can obtain the set of generated monomials by putting or deleting the generated monomial in the new memory, as we transform each monomial in $F$ with constant time. After we get all generated monomials, the summation between $F$ and the generated set, also with the monomials of $P(S^{-1}(\bfa))$ can be done in $\mathcal{O}(n^3)$. Therefore, the complexity for frame changing by each Clifford gate is $\mathcal{O}(n^3)$ and the total time complexity is at most $\mathcal{O}({\rm poly}(n))$ given that we have ${\rm poly}(n)$-number of $2$-qubit Clifford gates.

After getting through all the Clifford gates, by the last argument of Section~\ref{sup1:secB}, we can always choose $\bfv_{F}\in V_{n}$ for the final frame without linear terms by applying an additional translation to the Wigner function to be $W^{F_{\rm in}}_{\rho}(S^{-1}(\bfu+\bfv_F))$, i.e., $W^{F}_{U\rho U^{\dag}}(S(\bfu)+\bfv_F)=W^{F_{{\rm in}}}_{\rho}(\bfu)$. Also, this translation does not affect the scale of total time complexity. From now on, for convenience, we will always assume that the final frame has no linear terms.

\subsection{Weak simulation}\label{supsec1:C}
Now, we discuss sufficient conditions of weak simulation via the framed Wigner function. Suppose the frame $F$ satisfies $F(\mathbf{0}_x,\bfa_z)=0$. We then note that $P^{F}(\bfx|\bfu)$ with $\bfx\in \bfZ^n_2$ can be expressed as
\begin{align}
    P^{F}(\bfx|\bfu)&=\braket{\bfx|A^{F}(\bfu)|\bfx}\\
    &=\frac{1}{2^{n}}\sum_{\bfa\in V_{n}}(-1)^{[\bfu,\bfa]+F(\bfa)}\braket{\bfx|T_{\bfa}|\bfx}\\
    &=\frac{1}{2^{n}}\sum_{\bfa\in V_{n}}(-1)^{[\bfu,\bfa]+F(\bfa)}\delta_{\bfa_x,0}(-1)^{\bfa_z \cdot \bfx} \\
    &=\frac{1}{2^{n}}\sum_{\bfa_z\in \bfZ^{n}_2}(-1)^{\bfa_z\cdot(\bfx+\bfu_x)}\\
    &=\delta_{\bfx,\bfu_x}.
\end{align}
Next, we express the Born probability $p(\bfx)={\rm Tr}(U\rho U^{\dagger}\ket{\bfx}\bra{\bfx})$ in terms of the framed Wigner function. When the output state $U\rho U^\dagger$ is positively represented as $W^{F}_{U \rho U^{\dag}}(\bfu)=W^{F_{\rm in}}_{\rho}(S^{-1}(\bfu+\bfv_F))$, we observe that
\begin{align}
    {\rm Tr}(U\rho U^{\dagger}\ket{\bfx}\bra{\bfx})&=\sum_{\bfu,\bfa\in V_n}W^{F_{\rm in}}_{\rho}(S^{-1}(\bfu+\bfv_F))\braket{\bfx|A^{F}(\bfu)|\bfx}\\
    &= \sum_{\bfu,\bfa}W^{F_{\rm in}}_{\rho}(S^{-1}(\bfu+\bfv_F))\delta_{\bfx,\bfu_x}\\
    &=\sum_{\bfu_z\in \bfZ^n_2,\bfa\in V_n}W^{F_{in}}_{\rho}(S^{-1}((\bfx+(\bfv_{F})_x,\bfu_z+(\bfv_F)_z)).
\end{align}
By the above result, we can make a weak simulation scheme, which samples outcomes $\bfx$ following the probability distribution $p(\bfx)$, as follows:
\begin{enumerate}
    \item Sample the phase point $\bfu\in V_{n}$ from $W^{F_{\rm in}}_{\rho}(\bfu)$ with the initial frame $F_{\rm in}$.
    \item Update $\bfu\leftarrow S(\bfu)$
    \item Update $\bfu\leftarrow \bfu+\bfv_F$
    \item Output $\bfu_x$.
\end{enumerate}

Even if the resulting frame does not satisfy the condition $F(\mathbf{0}_x,\bfa_z)=0$, we could find marginal measurements in which the reduced frame satisfies that condition. We explain this in more detail here. Without losing the generality, assume that we measure only the first to $k(\le n)$-th qubits and then trace out the others. The marginal measurement probability to obtain $\bfx'\in \bfZ^k_2$ then becomes
\begin{align}
    p(\bfx') &= \sum_{\bfx''\in \bfZ^{n-k}_2}{\rm Tr}{U\rho U^{\dagger}\ket{\bfx'\oplus\bfx''}\bra{\bfx'\oplus \bfx''}}\\
    &=\sum_{\bfx''\in \bfZ^{n-k}_2}\sum_{\bfu,\bfa\in V_n}W^{F_{\rm in}}_{\rho}(S^{-1}(\bfu+\bfv))\braket{\bfx' \oplus \bfx''|A^{F}(\bfu)|\bfx' \oplus \bfx''}
    \\&=\frac{1}{2^{k}}\sum_{\bfu}W^{F_{\rm in}}_{\rho}(S^{-1}(\bfu+\bfv))\sum_{\bfa_z'\in \bfZ^{k}_2}(-1)^{\bfu_x\cdot(\bfa_z'\oplus\mathbf{0}'')+\bfa_z'\cdot \bfx'+F(\mathbf{0},\mathbf{\bfa'_z}\oplus \mathbf{0}'')}.
\end{align}
Hence, if the $F'(\mathbf{0}_x,\bfa'_z)=F(\mathbf{0}_x,\mathbf{\bfa'_z}\oplus \mathbf{0}'') = 0$, the last equation becomes $\frac{1}{2^{k}}\sum_{\bfu}W^{F_{\rm in}}_{\rho}(S^{-1}(\bfu+\bfv_F))\delta_{\bfx',\bfu_x}$. Then, we can efficiently measure the outcome $\bfx'$ by similar steps to the main algorithm. We just need to replace $\bfu_x$ with $(u_{1x},\ldots,u_{kx},0,\ldots,0)$. In the same way, if the resulting frame after marginalizing arbitrary $k$-qubits satisfies $F'(\mathbf{0}'_x,\bfa'_z)=0$, we can efficiently measure the marginal outcome. We note that after marginalizing $n-1$ qubits, the reduced frame must become zero.

Now, let us recollect the results obtained through this section to encapsulate the proof. We assume that a input state is a product state $\rho=\bigotimes _{i=1}^{n}\rho_i$ and each $\rho_i$ is positively represented under a single qubit frame $F_i=0$ or $F_i=a_{ix}a_{iz}$. Then $\rho$ can be positively represented under $F_{\rm in}=\sum_{i=1}^{n} b_i a_{ix}a_{iz}\;(\forall b_i\in\left\{0,1\right\})$. As we discussed in the previous section, the time complexity of frame changing is up to ${\rm poly}(n)$ as well as obtaining $S$~\cite{aronson2004} with at most $\mathcal{O}(n^3)$ memory cost. Marginalizing the resulting frame until it becomes linear takes at most $\mathcal{O}(n^4)$-time. This is because checking if a given frame has non-linear terms takes $\mathcal{O}(n^3)$-time, and the reduced frame must be zero after the marginalization of $n-1$ qubits. Also, updating $\bfu$ to $\bfu'$ is a simple matrix multiplication which takes $\mathcal{O}(n^2)$ time. This completes the proof of Theorem~1.
In Section~\ref{sup:sec_hypergraph}, we provide a systematic program to solve this problem by solving the graph theoretical problem.

\subsection{Efficient simulation of Pauli measurements to stabilizer states}
If the depth of the circuit becomes high, the resulting frame may contain many quadratic and cubic terms. Hence, only a small number of qubits might be efficiently simulatable. However, we show that if the input is a stabilizer state, any non-adaptive Clifford circuit can be transformed to the \emph{CH-form} \cite{bravyi2019}, which satisfies the efficient sampling condition. This can be shown by the following Lemma:
\begin{lemma}\label{sup:lem1}[Bruhat Decomposition \cite{dmitri2018, bravyi2021}]
    (i) An arbitrary $n$-qubit Clifford circuit can be rewritten by layers ${\rm hF'}-{\rm SW}-{\rm H}-{\rm hF}$, where ${\rm H}$ is a layer of Hadamard gates, ${\rm SW}$ is a layer of SWAP gates (hence of CNOT gates) and ${\rm hF},{\rm hF'}$ are CNOT-CZ-S-Pauli layered circuits. This decomposition can be done with $poly(n)$-time.

    (ii) Starting from the zero frame $F_{\rm in}=0$, the changed frame $F$ after ${\rm Pauli}-{\rm H}-{\rm hF}$ section satisfies $F({\bf0}_x,\bfa_z)=0$.
\end{lemma}
\begin{proof}
    The proof of (i) can be found in Refs.~\cite{dmitri2018,bravyi2021}. We show (ii) by noting that the phase functions (see Table.~\ref{sup:table}) of $H,S,CNOT$ gates do not have quadratic or cubic monomials with only $a_{iz}$ terms, and the symplectic transforms of $S$ and $CNOT$ gates do not change $a_{ix}$ to $a_{jz}$. Furthermore, we note that the phase function of the Pauli operator is linear, and the symplectic operation is identity. Hence, the final frame $F(\mathbf{0},\bfa_z)$ must be linear, and those linear terms can be converted into translation following the last arguments of Section~\ref{supp:sec_complexity}.
\end{proof}
We can represent stabilizer states as zero state input rotated by a Clifford operation. Also, by Lemma~1, this operation can be transformed to ${\rm hF'}-{\rm SW}-{\rm H}-{\rm hF}$ form. However, the first CNOT-CZ-S part of $hF'$ section and ${\rm SW}$ section do not affect to zero state. Hence, we have a zero state followed by ${\rm Pauli}-{\rm H}-{\rm hF}$ sections. Note the zero states can be positively represented under zero frame, and the resulting frame satisfies $F(\mathbf{0}_x,\bfa_z) = 0$. Therefore, all the $n$-qubit measurement outcomes $\bfx$ can be efficiently simulated via the framed Wigner function. 

\section{Finding efficiently simulatable qubits by solving the vertex cover problem} \label{sup:sec_hypergraph}
In the main text, we discuss that the $k$-marginal qubits are efficiently simulatable when tracing out the $(n-k)$ qubits until the reduced frame meets the condition $F'({\bf0}_x,\bfa'_z) = 0$. In this section, we discuss how this can be translated into a graph theoretical problem, known as \textit{the vertex cover problem} in more detail.
\subsection{Basic notation of graph theory} \label{sup3:secA}
Here, we introduce a formal definition of hypergraph and the vertex cover problem. 
\begin{definition}[Hypergraph]\label{sec3:def1}
    (i) Let $V$ be a non-empty set.
    The \emph{hypergraph} $G(V, E)$ is defined by a tuple of the vertex set $V$ and the edge set $E$, a set of hyperedges $e \subset V$.
    

    Let $G(V,E)$ be a hypergraph.

    (ii) We say $G$ is of $k$th-degree ($k\in \mathbb{N}$) if all edges in $G$ contain at most $k$ vertices.
     
    (iii) If all elements of $E$ have $(k\in\mathbb{N})$-number of elements in $V$, then we call $G(V,E)$ as a \emph{$k$-uniform hypergraph} or simply a \emph{$k$-graph}. Also, a 2-graph is just called a \emph{graph}.
\end{definition}

Next, we define several properties of a hypergraph.
\begin{definition}
    Let $G(V,E)$ be a hypergraph.
    
    i) $V'\subset V$ is a \emph{vertex cover} if all edges in $E$ contain some elements in $V'$. $A\subset V$ is the \emph{minimal vertex cover} if every vertex cover $V'\subset V$ satisfies $|V'|\ge |A|$.
    
    ii) $S'\subset V$ is an \emph{independent set} if any two elements in $S'$ are not contained in same edge in $E$. $B\subset V$ is the \emph{maximal independent set} if every independent set $S'\subset V$ satisfies $|S'|\le |B|$.
    
    iii) We denote $\nu(G)\equiv|A|$ as the size of minimal vertex cover and $\triangle(G)\equiv|B|$ as the size of maximal independent set. We note that minimal(maximal resp.) vertex cover(independent set) could not be unique, but $\nu(G)$ and $\triangle(G)$ are unique.

    iv) For $v\in V$, we define the \emph{degree} of $v$, $d(v)$ as the number of edges containing $v$.

    v) \emph{Degree} of the graph, $d(G)$ is $\max_{v\in V}\left\{d(v)\right\}$. 
\end{definition}

The \emph{vertex cover problem} is to find the minimal vertex cover of a given hypergraph. Now, we obtain the following result. 
\begin{corollary}\label{supp:graphlem} 
    For any hypergraph $G(V,E)$, $\nu(G)+\triangle(G)\le |V|$.
\end{corollary}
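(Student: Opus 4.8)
The plan is to prove the bound $\nu(G) + \triangle(G) \le |V|$ by exhibiting, for a fixed choice of minimal vertex cover and maximal independent set, a decomposition of $V$ into two disjoint pieces that "absorb" these two quantities respectively. The natural candidate is to take $A$ a minimal vertex cover and consider $B = V \setminus A$, and to argue that $B$ is an independent set; since $\triangle(G)$ is the size of the \emph{maximal} independent set we would then get $\triangle(G) \ge |B| = |V| - |A| = |V| - \nu(G)$, which rearranges to exactly the claimed inequality. So the real content reduces to a single implication: \emph{the complement of a vertex cover is an independent set}.

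First I would unpack the definitions from the excerpt. A set $A \subset V$ is a vertex cover if every edge $e \in E$ contains at least one element of $A$; equivalently, no edge is entirely contained in $V \setminus A$. A set $S \subset V$ is an independent set if no two of its elements lie in a common edge. Now take any vertex cover $A$ (in particular a minimal one, so $|A| = \nu(G)$) and set $B = V \setminus A$. Suppose, for contradiction, that $B$ fails to be independent: then there exist distinct $v, w \in B$ and an edge $e \in E$ with $v, w \in e$. But then $e \cap A = \emptyset$ would \emph{not} immediately follow — I need to be slightly careful here. The correct statement is: if $B$ contains two vertices of some edge $e$, that does not by itself contradict $A$ being a cover, because $e$ might contain a third vertex lying in $A$. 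So the complement of a vertex cover is \emph{not} in general an independent set for hypergraphs with edges of size $\ge 3$.

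This is the main obstacle, and it means the naive argument above is wrong for hypergraphs (it only works cleanly for ordinary graphs). The fix I would pursue instead is the reverse direction: start from a \emph{maximal} independent set $B$ with $|B| = \triangle(G)$, and show that $V \setminus B$ is a vertex cover — this implication \emph{does} hold for hypergraphs. Indeed, if some edge $e$ were disjoint from $V \setminus B$, then $e \subseteq B$; since $|e| > 1$ by the standing assumption in Definition~\ref{sec3:def1}, $e$ contains two distinct vertices of $B$ lying in a common edge, contradicting independence of $B$. Hence $V \setminus B$ is a vertex cover, so $\nu(G) \le |V \setminus B| = |V| - \triangle(G)$, giving $\nu(G) + \triangle(G) \le |V|$ as desired. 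I would present the proof in this direction, emphasizing where the hypothesis $|e| > 1$ is used (it is essential — a graph with an isolated-vertex "edge" of size one would break the argument, which is precisely why the paper excludes such edges). The remaining steps are purely formal: quote the definitions, do the one-line contradiction, and rearrange.

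\begin{proof}
Let $B \subset V$ be a maximal independent set of $G$, so that $|B| = \triangle(G)$. We claim that $V \setminus B$ is a vertex cover. Suppose not; then there is an edge $e \in E$ with $e \cap (V \setminus B) = \emptyset$, i.e.\ $e \subseteq B$. By the standing convention in Definition~\ref{sec3:def1} we have $|e| > 1$, so $e$ contains two distinct vertices $v, w \in B$ lying in the common edge $e$, contradicting the fact that $B$ is independent. Hence $V \setminus B$ is a vertex cover, and by minimality of $\nu(G)$ we obtain
\begin{equation}
    \nu(G) \le |V \setminus B| = |V| - |B| = |V| - \triangle(G).
\end{equation}
Rearranging gives $\nu(G) + \triangle(G) \le |V|$.
\end{proof}
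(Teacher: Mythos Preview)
Your proof is correct and follows essentially the same approach as the paper: start from a maximal independent set $B$, show its complement $V\setminus B$ is a vertex cover (using $|e|>1$ to derive a contradiction from an uncovered edge lying entirely in $B$), and conclude $\nu(G)\le |V|-\triangle(G)$. The paper wraps the same argument in an extra layer of contradiction (assuming $|V|-\triangle(G)<\nu(G)$ first), but the content is identical; your presentation is in fact a bit cleaner.
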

\begin{proof}
    Consider a maximal independent set $S\subset V\;(|S|=\triangle(G))$ and suppose $|V\backslash S|=|V|-\triangle(G)<\nu(G)$. Then $V\backslash S$ must not be the vertex cover. Hence, there exists $e\in E$ such that $e$ does not have any elements in $V\backslash S$. Since $e$ is non-zero, without loss of generality, say $e=\left\{v^e_1,v^e_2,\ldots,v^e_k\right\}\subset S\;(k\ge 2)$. Since $S$ is independent, $|e|=1$, which contradicts that $k\ge2$.
    In conclusion, $|V\backslash S|=|V|-\triangle(G)\ge \nu(G)\Rightarrow \nu(G)+\triangle(G)\le|V|$.
\end{proof}
\noindent If the graph representation is $2$-graph (resulting frame is quadratic), then it is known that $\triangle{G}=|V|-\nu(G)$.

The vertex cover problem is an NP-hard problem \cite{halperin2002}, but several efficient and approximative algorithms are valid \cite{guruswami2020}. These algorithms may obtain vertex covers such that the size is larger than the minimal cover but is within a reasonable scale factor. The typical example we use throughout this paper is a \emph{greedy algorithm}. The detailed procedure of the algorithm is as follows. Suppose we have a hypergraph $G(V,E)$.
\begin{enumerate}
    \item For each vertex $v\in V$ of $G$, count $d(v)$ (\emph{degree of $v$}), the number of edges containing $v$.

    \item Take $v'={\rm argmax}_{v\in V}\left\{d(v)\right\}$.

    \item Remove $v'$ from $G$ and also remove all edges containing $v'$. 

    \item Repeat the above sequences with at most $\left|V\right|$ times until no edges are left. 
\end{enumerate}
Step~1 takes at most $\mathcal{O}(|E|)$-time, and Step~2 takes $\mathcal{O}(|V|)$-time. Step~3 takes at most $\mathcal{O}(|E|)$-time, which is the time complexity of set subtraction. Since we repeat these steps at most $|V|$-time, the total time complexity of the greedy algorithm is at most $\mathcal{O}(|V||E|)$. 

\subsection{Graph representation of the frame function and marginalization}

Now, we show how to connect finding simulatable marginal qubits and the vertex cover problem of the hypergraph.
We recall the brief explanation of the graph representation of the frame function in the main text. We shall call this a \emph{frame graph}.

\begin{definition}
    Suppose we have a frame function $F$ of at most $3$rd-degree such that $F(\mathbf{0}_x,\bfa_z) = \sum_{(i,j) \in {\cal E}_2} a_{iz}a_{jz} + \sum_{(i,j,k)\in {\cal E}_3} a_{iz}a_{jz}a_{kz}\pmod{2}$ for proper index set ${\cal E}_2\subset [n]^{\otimes 2},{\cal E}_3\subset [n]^{\otimes 3}$. \emph{Frame graph} $G_{F}$ of $F$ is a hypergraph $G(V,E)$ where $V$ is a set of vertex $\left\{a_{1z},a_{2z},\ldots,a_{nz}\right\}$ and $E= \mathcal{E}_2 \bigcup \mathcal{E}_3$.
\end{definition}

From the Section~\ref{supsec1:C}, we know that whenever we marginalize the $i$-th qubit, we substitute $a_{iz}=0$ to the representing frame $F'(\mathbf{0}_x,\bfa_z)$. In other words, all monomials containing $a_{iz}$ vanish. In a graphical notation, this means that starting from the frame graph of $F'(\mathbf{0},\bfa_z)$, we eliminate both the $i$-th vertex and all connecting hyperedges. Therefore, tracing the qubits to make the resulting frame function zero is equivalent to finding vertex cover: eliminating vertices and connecting edges until all edges vanish. We can do this using the greedy algorithm and note that $|V|$ is at most $n$ and $|S|$ is within $\mathcal{O}(n^3)$. From the arguments in Section~\ref{sup3:secA}, the total time complexity is at most $\mathcal{O}(n^4)$.

Solving the vertex cover problem of the final frame enables us to search simulatable qubits from a given highly entangled circuit that is hard to pick by hand. Also, we can use various modern approximation techniques to find more qubits over the greedy algorithm \cite{halperin2002}.

We can find the largest number of simulatable qubits if we find the minimal vertex cover. Also, we note that the independent set of the frame graph is also a set of simulatable qubits. However, Corollary~\ref{supp:graphlem} says that finding minimum vertex cover and tracing out the qubits corresponding to the vertices produces a larger set of simulatable qubits than finding the maximal independence set.

\section{Simulations on log-depth Clifford circuits}

\subsection{Circuit architecture}
We first briefly explain some definitions regarding two-different $n$-qubit Clifford circuit architectures \cite{dalzell2022}. One is the \emph{1D} architecture, which consists of alternating layers with $2$-qubit random Clifford gates between neighboring qubits (see Fig.~\ref{sup:fig1}(a)). We also regard the Clifford gates connecting the first and the last qubits as a neighboring gate. The other is the \emph{completely connected} architecture, where we put a random single qubit gate to each qubit, and then random Clifford gates are applied between randomly chosen two qubits, regardless of their locations. \emph{Gate count} is the total number of 2-qubit random Clifford gates. In both architectures, random Clifford gates are uniformly chosen~\cite{dalzell2022}, and we enact each gate until the gate count reaches the designated value. The \emph{depth} of the circuit is defined as the minimum value of the number of layers in which a set of 2-qubit gates can be applied in parallel (also, see Ref.~\cite{dalzell2022} for more detail). Therefore, for 1D cases of the gate count $L= \alpha n \ln (n)$, the depth is $\alpha  \mathcal{O}( \ln (n))$. In completely connected cases, given that $L= \alpha n \ln (n)$, the depth does not exceed $\alpha  \mathcal{O}( (\ln (n))^2)$~\cite{brown2015, dalzell2022} with $1-\mathcal{O}(\frac{1}{\poly(n)})$ probability. Furthermore, the authors in Ref.~\cite{dalzell2022} showed that in both cases, there exists sufficient and necessary scaling of gate count $\mathcal{O}(n\ln(n))$  such that outcome probability distribution of random circuit sampling (including random unitary gates) satisfies the anti-concentration condition~\cite{bouland2019}.

\begin{figure*}[b]
\centering
    \includegraphics[width=1\linewidth]{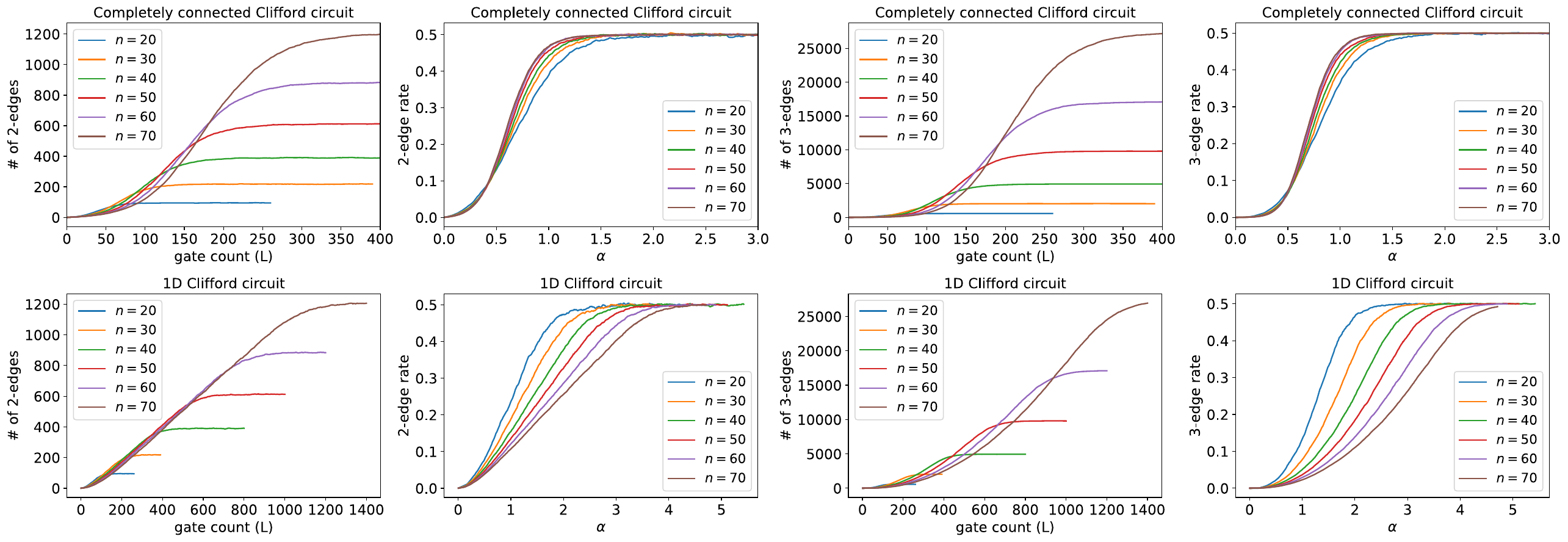}
    \caption{The number of hyperedges of the final frame for random Clifford circuits with completely-connected and 1D-connected architecture. At each gate count $L = \alpha n \ln n$, we averaged the number of hyperedges over 200 samples. The rate denotes the ratio between the number of hyperedges to the maximum possible numbers, which are $\binom{n}{2} = \frac{n(n-1)}{2}$ and $\binom{n}{3} = \frac{n(n-1)(n-2)}{3!}$ for 2 and 3-edges, respectively.}\label{sup:num_edges}    
\end{figure*}

\subsection{Hypergraph of the frame function by increasing gate counts}
We discuss how the frame function and its corresponding hypergraph behave when we increase the number of gates in a quantum circuit. We first note that by applying each Clifford gate, the update of the frame function is governed by two factors: the phase function ($P(\bfa)$) and the symplectic transformation ($S$). More precisely, following the arguments in Sec.~\ref{sup1:secB}, if the frame function after applying the $\ell$-th Clifford gate is given by $F_{\ell}(\bfa) = \sum_{\mu, \nu} c_{(\mu, \nu)} a_\mu a_\nu + \sum_{\mu, \nu, \omega} c_{(\mu, \nu, \omega)} a_\mu a_\nu  a_\omega$, after applying the $(\ell+1)$th Clifford gate, the frame function is updated to 
$$
F_{\ell+1}(\bfa) = F_{\ell}(S^{-1}\bfa)+P(S^{-1}\bfa) = F_{\ell}(\bfa) + \Delta F_\ell(\bfa),
$$
where we denote $\Delta F_\ell(\bfa)$ as the updated frame function. We can further decompose the updated frame function  into
$$
\Delta F_\ell(\bfa) = \Delta F^S_\ell (\bfa) + \Delta F^P_\ell (\bfa),
$$
where $\Delta F^S_\ell (\bfa)$ is the contribution from the syplectic transformation $S$ and $ \Delta F^P_\ell (\bfa)$ is the contribution from the phase function $P$. For example, by adding $CNOT_{i\rightarrow j}$ in Table~\ref{sup:table}, the transform $S^{-1} \bfa$ is given by
$$
[S^{-1} \bfa]_{\mu} = 
\begin{cases}
a_{iz} + a_{jz} & (\mu = {iz})\\
a_{ix} + a_{jx} & (\mu = {jx}) \\
a_\mu & ({\rm otherwise}),
\end{cases}
$$
which can be written as $[S^{-1}\bfa]_\mu = a_\mu + \delta_{\mu, iz} a_{jz} + \delta_{\mu, jx} a_{ix}$ in terms of the delta function, $\delta_{\mu,\mu'} = 0$ for $\mu \neq \mu'$ and $\delta_{\mu,\mu'} = 1$ for $\mu = \mu'$.
This leads to 
$$
\begin{aligned}
F_{\ell}(S^{-1}\bfa) &= \sum_{\mu, \nu} c_{(\mu, \nu)} [S^{-1} \bfa]_\mu [S^{-1} \bfa]_\nu + \sum_{\mu, \nu, \omega} c_{(\mu, \nu, \omega)} [S^{-1} \bfa]_\mu [S^{-1} \bfa]_\nu [S^{-1} \bfa]_\omega \\
&= \sum_{\mu, \nu} c_{(\mu, \nu)} (a_\mu + \delta_{\mu, iz} a_{jz} + \delta_{\mu, jx} a_{ix}) (a_\nu + \delta_{\nu, iz} a_{jz} + \delta_{\nu, jx} a_{ix})  \\
&\quad + \sum_{\mu, \nu, \omega} c_{(\mu, \nu, \omega)} (a_\mu + \delta_{\mu, iz} a_{jz} + \delta_{\nu, jx} a_{ix}) (a_\nu + \delta_{\nu, iz} a_{jz} + \delta_{\nu, jx} a_{ix}) (a_\omega + \delta_{\omega, iz} a_{jz} + \delta_{\omega, jx} a_{ix}) \\
&= F_\ell(\bfa) + \Delta F^S_\ell(\bfa),
\end{aligned}
$$
where $\Delta F^S_\ell (\bfa)$  adds additional terms to the frame function for each monomial of $F_\ell(\bfa)$ contains $a_{iz}$ or $a_{jx}$ (i.e., all the terms containing the delta functions). Meanwhile, $P(S^{-1}\bfa)$ adds additional terms in the frame function corresponding to $\Delta F^P_\ell(\bfa) = P(S^{-1}\bfa) = [S^{-1} \bfa]_{jz} [S^{-1} \bfa]_{ix} ([S^{-1} \bfa]_{iz} + [S^{-1} \bfa]_{jx}+1) = a_{jz} a_{ix} (a_{iz} + a_{jx} + 1)$. In contrast to $\Delta F^S_\ell(\bfa)$, we note that $\Delta F^P_\ell(\bfa)$ is independent of the previous frame function $F_\ell(\bfa)$. The updated frame functions can similarly be obtained for other Clifford gates based on Table~\ref{sup:table}.

In terms of the hypergraph, for each monomial in the updated frame function $\Delta F_\ell(\bfa)$, the update is given by the following rules:
\begin{enumerate}
\item The corresponding hyperedge is added when the hypergraph does not contain it.
\item The corresponding hyperedge is removed if the hypergraph already contains the updated hyperedge.
\end{enumerate}
As the hypergraph does not have any hyperedges at the beginning, at the earlier stage, the hypergraph is updated in a way that new hyperedges are added. However, as the number of hyperedges increases, there is a higher chance that the hypergraph already contains the hyperedge to be updated, in which case, the hyperedge is removed. Eventually, the number of newly generated hyperedges will balance the removed hypergraphs as the graph's hyperedges approaches its half maximum. As the hypergraph’s connectivity saturates at half-maximum (i.e., half-connected), the size of a minimum vertex cover also becomes large. Consequently, the number of efficiently simulatable marginal qubits becomes small as the vertex cover corresponds to the qubit that should be traced out.

The numerical simulation (Fig.~\ref{sup:num_edges}) shows that the numbers of both $2-$ and $3-$hyperedges saturate to the half maximum. For a completely connected Clifford circuit, the saturation rapidly happens at the gate count $L = \alpha n \ln (n)$, which corresponds to the circuit depth of ${\cal O}( (\ln n )^2)$. Furthermore, the phase transition observed in the completely connected circuit with gate count $L = \alpha n \ln n$ (Fig. 3(d) in the main text and related discussion in Sec.~\ref{sup:FSS}) indicates that the number of simulatable qubits is highly suppressed even within log-depth. Meanwhile, due to their local connectivity, 1D circuits undergo a moderate increase in the number of hyperedges toward the half maximum. We expect that this is closely related to the observation that the number of classically simulatable qubits linearly scales for log-depth 1D Clifford circuit (Fig. 3(b) in the main text) while leaving an open question of whether a non-vanishing scaling of simulatable qubits beyond log-depth can be found.

\subsection{Simulation time comparison with Qiskit}
\begin{figure*}[t]
\centering
    \centering
    \subfigure[]{
    \includegraphics[width=8cm]{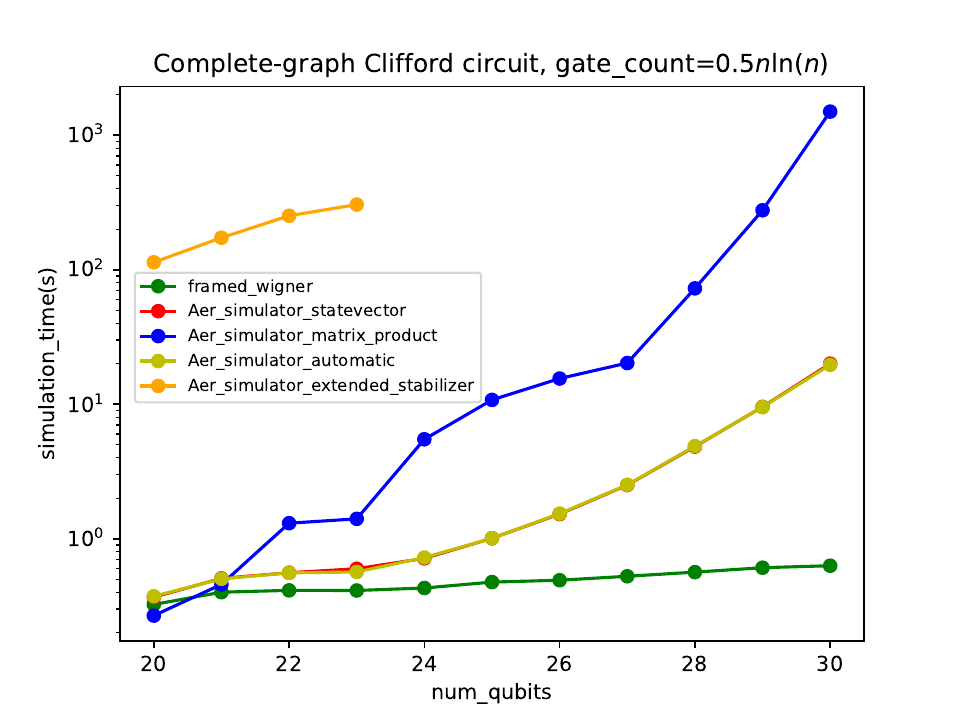}}
    \subfigure[]{
    \includegraphics[width=8cm]{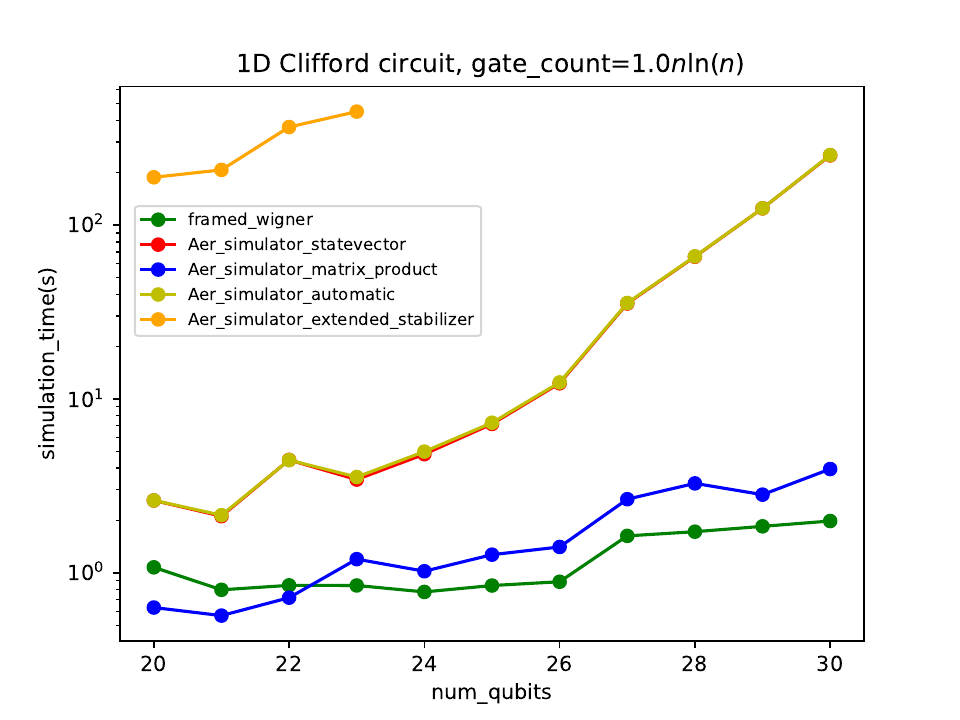}}
    \subfigure[]{
    \includegraphics[width=8cm]{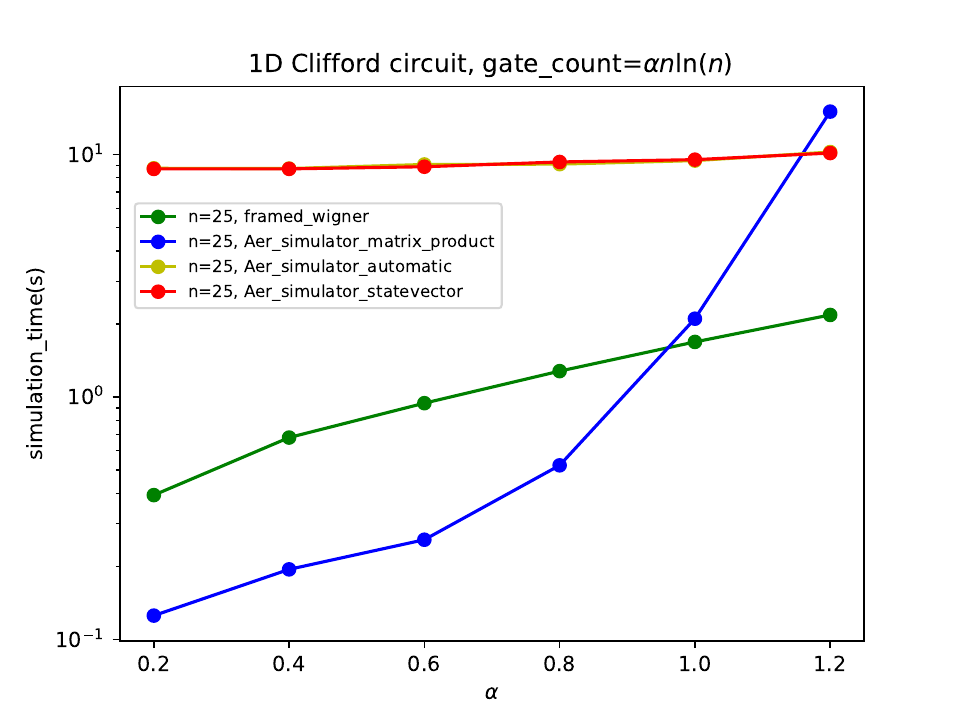}}
    \subfigure[]{
    \includegraphics[width=8cm]{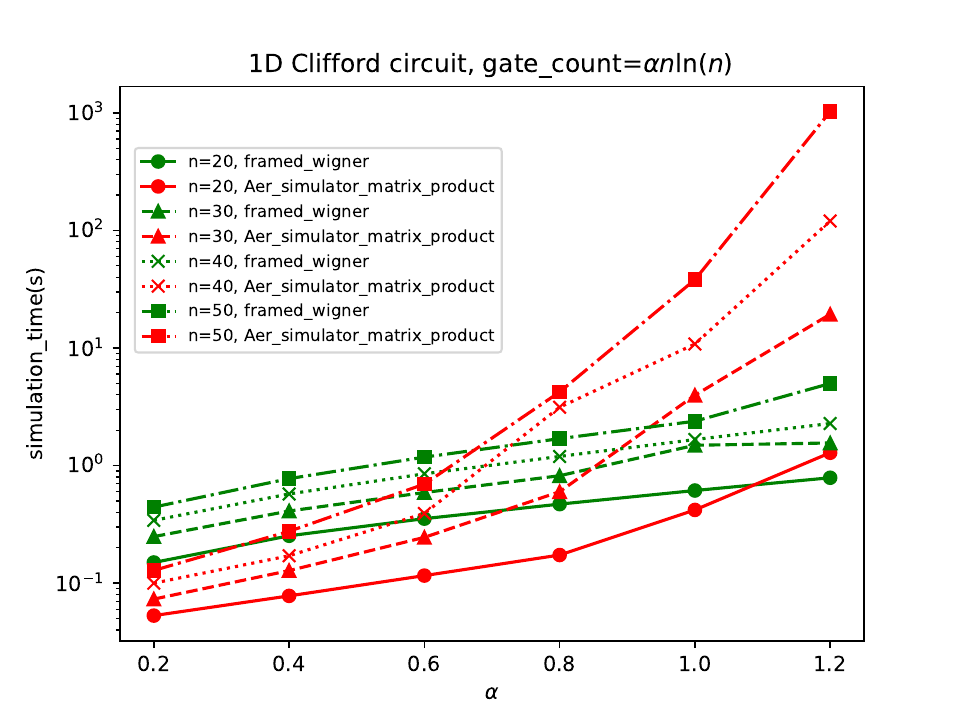}}
    
    \caption{Comparison of simulation time for the log-depth random Clifford circuits between framed Wigner function method and Qiskit Aer simulators. (a,b) Average time (over 200 random samples) for simulating the log-depth Clifford circuits. We fixed the gate count scaling factor $\alpha$, while increasing the number of qubits. (c,d) Average time (over 100 random samples) for simulating the 1D Clifford circuits by using the framed Wigner function method and Qiskit Aer simulators with varying factors of depth. We took the $n$-copies of $\ket{A}$ state as an input for both (a, b, c, d). Here, Aer\textunderscore stabilizer simulator is not functioning because given circuits have non-stabilizer state input, nor is Aer\textunderscore density\textunderscore matrix by the excessive memory issue.}\label{sup:fig3}    
\end{figure*}
In this section, we compare the simulation time of shallow non-adaptive Clifford circuits (see Fig.~\ref{sup:fig3}) between our simulator and the Qiskit Aer\textunderscore simulators.
For the Wigner function method, simulation time includes the frame changing and finding simulatable qubits, as well as sampling and rotating the phase point. In both simulators, Wigner and Qiskit, we measure the marginal outcome only once. Given that we have a circuit depth $d$, the total time complexity for frame changing is $\mathcal{O}(dn^4)$ because the gate count is $\mathcal{O}(dn)$. The time complexity $\mathcal{O}(\alpha\poly(n))$ in the main text is obtained by the gate count $\alpha n\ln n$. From Sec.~\ref{sup3:secA}, the time complexity for solving the vertex cover problem is $\mathcal{O}(n^4)$. Hence the total time complexity is $\mathcal{O}(\alpha n^4\ln(n))+\mathcal{O}(n^4)\le \mathcal{O}(\alpha \poly(n))$.

However, when $\alpha$ is low such that the circuit is not fully connected, the time complexity is lower than the above worst case. To be specific, we suppose that the given 1D circuit has a depth $d$. Then, two qubits with over $2d$-interval cannot be connected. This means that the degree of vertices in the resulting frame graph does not exceed over $\mathcal{O}(d^2)$, and the number of edges of the resulting frame graph is $\mathcal{O}(nd^2)$. Given that the gate count is $\mathcal{O}(dn)$, the time complexity for the frame changing is $\mathcal{O}(d^3 n^2)$. Furthermore, from Sec.~\ref{sup3:secA}, the time complexity for solving the vertex cover problem is $\mathcal{O}(n^2 d^2)$. In conclusion, given that $d=\alpha \mathcal{O}(\ln (n))$, the total time complexity of frame changing and vertex cover solving is $\mathcal{O}(\alpha^3 n^2 (\ln(n))^3)$.

In Fig.~\ref{sup:fig3} (a,b), we observe that our simulator executes the marginal sampling with the polynomial scaling of the time costs by increasing the number of qubits, which can be compared to the Aer\textunderscore simulators whose time complexity, except for Aer\textunderscore matrix\textunderscore product\textunderscore state (Aer\textunderscore mps), increases exponentially by increasing the number of qubits. From Fig.~\ref{sup:fig3} (c,d), the Aer\textunderscore mps performs better than framed Wigner when the circuit is low-entangled or has few qubits. Because the Aer\textunderscore mps employs the tensor network method and is efficient for circuits with large $n$ but with low-entanglement~\cite{vidal2003}. However, we note that for $n\ge 30$, the simulation time of Aer\textunderscore mps increases exponentially by increasing the scale factor of depth, $\alpha$. Hence, our method outperforms Aer\textunderscore mps for this region.

When the number of simulatable qubits is not sufficiently large, there could be other methods, for example, by directly calculating the Born probability to simulate Clifford circuits~\cite{aronson2004}. However, we expect that the low time scaling of the Wigner function simulator, while keeping a sufficient portion of simulatable qubits, leads to more efficient simulation for larger $n$ compared to the previous methods.



\subsection{Finite-sized scaling for log-depth completely connected circuits}
\label{sup:FSS}
Here, we explain the finite-sized scaling (FSS) analysis of Fig.~3 (e) in the main text. For the completely connected random Clifford circuits, the average number of measurable qubits ($n_{{\rm sim}}$) has two different scaling on $n$ for $\alpha > \alpha_c$ and $\alpha < \alpha_c$ with some critical point $\alpha_c$. In order to explore the critical point, we model the scaling function of $n_{{\rm sim}}$ in terms of $\alpha$ and $n$ as,
\begin{equation}
    n_{{\rm sim}}(\alpha,n)-n_{{\rm sim}}(\alpha_c,n)=f((\alpha-\alpha_c),n),
\end{equation}
for some function $f$ with two different scalings at $\alpha<\alpha_c$ and $\alpha>\alpha_c$. It naturally follows that $f(0,n)=0$ for all $n$ at the critical point. In order to estimate the critical values, we apply the FSS method to the data set with various $\alpha$ and $n$ values. We take the ansatz $f((\alpha-\alpha_c)n^{\frac{1}{\nu}})$ for some $\nu\in\mathbb{R}$ which is commonly found in the FSS literature~\cite{skinner2019}. By numerically optimizing the parameter of $\alpha_c$ and $\nu$ from the data set, we obtain a good collapse of data as in Fig. 3 (e) in the main text with the optimal parameter $\alpha_c = 0.81$ and $\nu = 1.28$. After we find the $(\alpha_c, \nu)$, we observe  $n_{{\rm sim}}(\alpha,n)-n_{{\rm sim}}(\alpha_c,n)\sim C(\alpha-\alpha_c)n^{\frac{1}{\nu}}\;(C\in \mathbb{R}^{+})$ when $\alpha < \alpha_c$. On the other hand, for $\alpha>\alpha_c$, $n_{{\rm sim}}(\alpha,n)$ shows almost flat behavior when increasing $n$ (see Fig. ~3(d) in the main text). 

\subsection{On the inseparability of efficiently measurable qubits found from the greedy algorithm for 1D circuits}\label{sup:sec4_1d}

For the $\log$-depth 1D architecture, even if we do not use the framed Wigner formalism, we can always find many qubits on which measurements are simulatable via brute-forced matrix calculation. We briefly explain how to do this and numerically show that simulatable qubits found by our method do not fall to this case. Suppose we have a 1D circuit with the depth $d=\mathcal{O}(\ln n)$. If two measurements are $2d$-far way from each other, then the first measurement does not influence the measurement outcome of the others (see Fig.~\ref{sup:fig1} (a) for the $d=3$ case). It means that we can sample the each outcome by calculating its Born probability with $\mathcal{O}(d^2)$-time. This is because during the calculation, we need to make Z-operators backward-evolved by $d$-depth 1D circuit and its time cost is proportional to the number of twirling gates. Therefore, since $\frac{n}{2d}=\Omega(\frac{n}{\ln n})$-number of qubits are separated each other, we can simulate those qubits within $\mathcal{O}(nd)=\mathcal{O}(n\ln n)$-time.

\begin{figure*}[t]
\centering
    \centering
    \subfigure[]{
    \includegraphics[width=5.5cm]{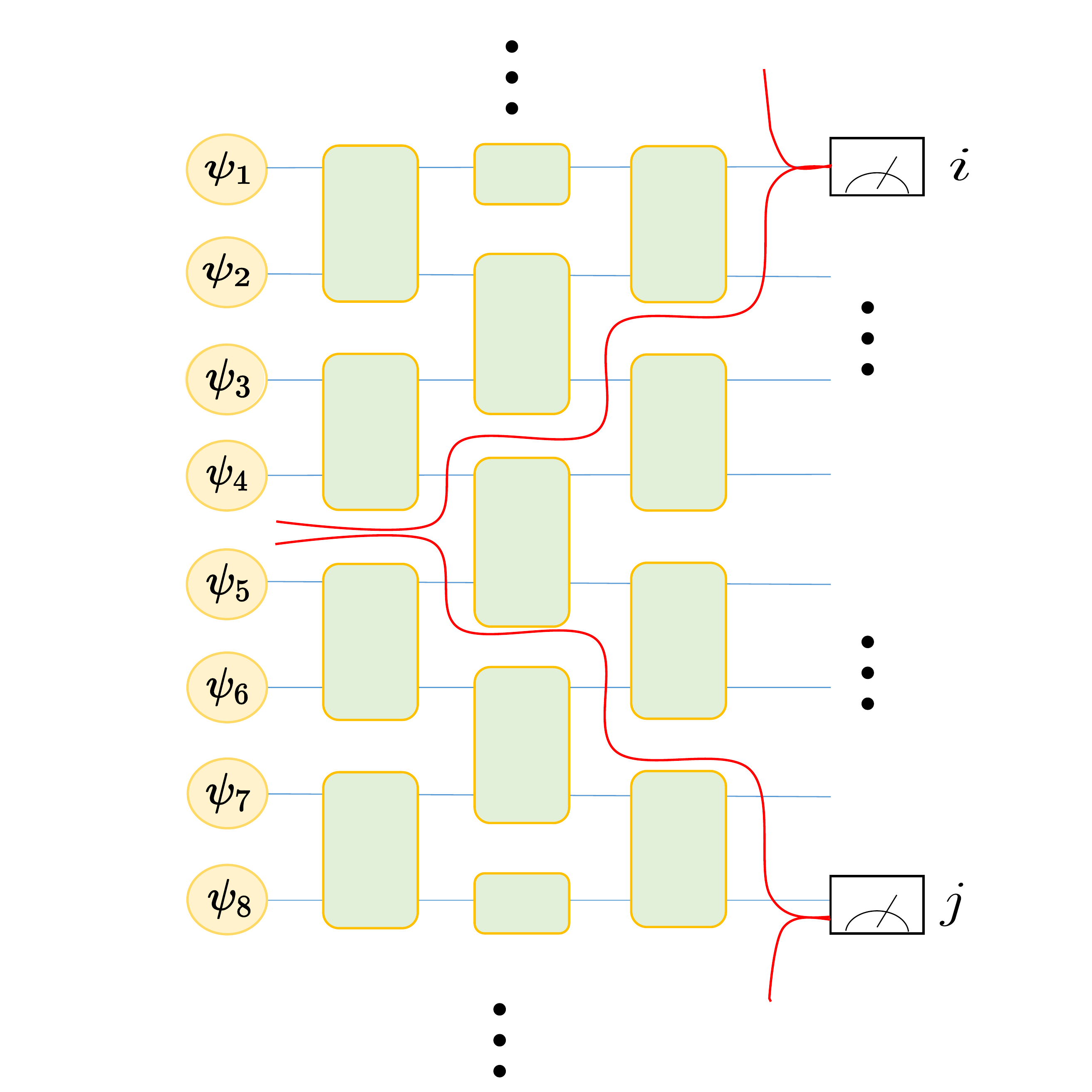}}
    \subfigure[]{
    \includegraphics[width=5.5cm,height=5cm]{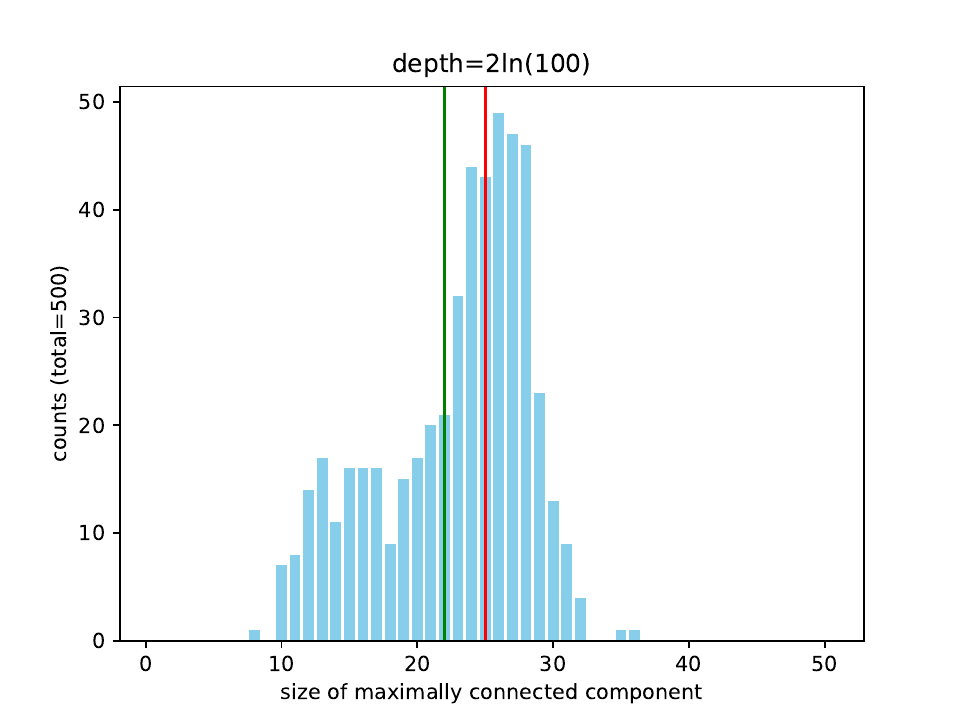}}
    \centering
    \subfigure[]{
    \includegraphics[width=5.5cm,height=5cm]{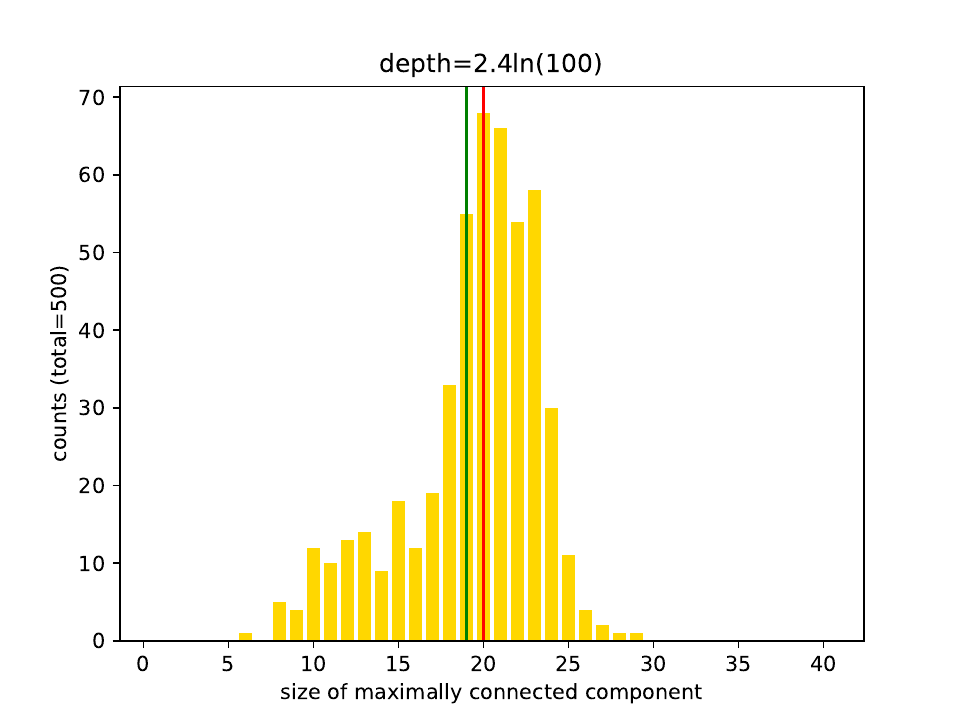}}
    \caption{ (a) Schematic illustration of a 3-depth 1D circuit. Measurement on the $i$-th qubit does not influence the measurement outcome of the $j$ (6-far away from $i$). If we only simulate these two measurements, we can separate this circuit into two portions sided by the red line. (b,c) The population of the size of maximally connected components for 500 numbers of 100-qubit 1D shallow Clifford samples. The green line represents the average size of maximally connected components, and the red line indicates the average value of total simulatable qubits of each sample. (b): Results of depth $=2\ln(100)$ 1D circuits (green line:22, red line:25).  (c): Results of depth $=2.4\ln(100)$ 1D circuits (green line:19, red line:20).}\label{sup:fig1}
\end{figure*}

However, this trivial method only finds distant groups with neighboring qubits.
In contrast, our methods also find simulatable qubits that are not far from each other, which can be numerically checked. More precisely, given a non-adaptive Clifford circuit with the Clifford unitary $U$, let $A_U\equiv \left\{i_1,i_2,\ldots, i_{\left|A_U\right|}\right\}\subset [n]$ be a set of locations of simulatable qubits. Now, let $Z_{i}$ be an $n$-qubit Paul operator, which affects the $Z$-operation to the $i$-th qubit and identity to the other qubits. We calculate the set of Pauli operators $P_{A_U}\equiv\left\{U^{\dagger}Z_{i}U|i\in A_U\right\}$ which can be obtained efficiently \cite{gottesman1998}. We then define the 2-graph $G(A_U,E_{A_U})$ where the set of vertex is $A_U$, and $E_{A_U}$ is a set of edges which connects $i,j\in A_{U}$ if and only if two Pauli strings $U^{\dagger}Z_{i}U$ and $U^{\dagger}Z_{j}U$ have non-trivial (not identity, not need to be same) Pauli operation on the same qubit locations. Now, we denote $\mathcal{C}(G(A_U,E_{A_U}))\subset A_U$ as a set of connected components of $G$, i.e., vertices consisting of connected subgraphs such that each subgraph is not connected with the others. We note that measurement on the set of qubits with location $C'\in\mathcal{C}(G(A_U,E_{A_U}))$ is independent of the measurement outcome on the set of other locations in $\mathcal{C}(G(A_U,E_{A_U}))$. Hence, the time complexity of weak simulation is upper bounded by $\mathcal{O}\left(\exp\left({\max_{C\in\mathcal{C}(G(A_U,E_{A_U}))}(\left|C\right|)}\right)\right)$, which can be achieved by the similar method with the first paragraph. Now we see Fig.~\ref{sup:fig1} (b) and (c). We randomly sample 500 numbers of $100$-qubit 1D Clifford circuit $U$ with zero frame input and find $\max_{C\in\mathcal{C}(G(A_U,E_{A_U}))}(\left|C\right|)$ efficiently via Python NetworkX packages, and we observed that most of the samples have many connected Pauli operators $U^{\dagger}Z_{i}U\;(i\in A_U)$ and hence such a trivial decomposition (in the first paragraph) is not applied to them. 

\subsection{Faster frame changing and solving vertex cover via adjacency list}\label{sup:fasterframe_shallow}
In this section, we introduce a faster way to update the frame and to solve the vertex cover problem by using the adjacency list, an alternative representation of a (hyper)graph. To this end, we define an extended form of the adjacency list by including the degree of each vertex. This enables us to proceed with frame changing and solving the vertex cover problem in $\mathcal{O}(dn^3)$-time, lower than the previous bound $\mathcal{O}(dn^4)$. We give the definition first.
\begin{definition}
    (i) Given a hypergraph $G(V,E)$, the \emph{adjacency list} $A_G$ is a $|V|\times \left(\frac{|V|^2-|V|}{2}+2\right)$ binary 2-dimensional list where rows represent each vertex $v$ and columns except for the last represents a subset $S$ of $V\backslash\left\{v\right\}$, where $|S| \le 2$ and its elements are as follows:
    \begin{align}
        (A_G)_{v,S}=\begin{cases}
            1&(\left\{v\right\}\cup S\in E)\\
            0&({\rm otherwise}).
        \end{cases}
    \end{align}
    Finally, elements of the last column indicate the degree of each vertex.\\
    (ii) Given a third-degree frame function $F$ defined in Eq.~\eqref{eq:frame_function}, we take the set $V=\left\{1x,2x,\ldots,nx,1z,2z,\ldots,nz\right\}$ and $E=\left\{\left\{i\right\}|c_{(i)}=1\right\}\cup\left\{\left\{i,j\right\}|c_{(i,j)}=1\right\}\cup\left\{\left\{i,j,k\right\}|c_{(i,j,k)}=1\right\}$. Then the \emph{adjacency list of frame function $F$} is $A_{G(V,E)}$.
\end{definition}

For example, consider a hypergraph $G(V,E)$ with $V=\left\{v_1,v_2,v_3\right\}$ and $E=\left\{\left\{1,2\right\},\left\{2,3\right\},\left\{1,2,3\right\}\right\}$. Then its adjacency list is expressed as, 
\begin{align}
    A_{G}=
    \left[
    \begin{array}{ccc|c}
        1&0&1&2\\
        1&1&1&3\\
        0&1&1&2
    \end{array}
    \right].
\end{align}
The row elements share the same vertex $v$ so that we can check the existence of given edge $e\in E$ by first choosing the row of $v\in e$ and finding the binary value corresponding to $S=e\backslash\left\{v\right\}$. This can be done in constant time. 

Now, we again apply the frame-changing algorithm for the Clifford gate acting on $(i,j)$-th qubits, shown in Sec.~\ref{supp:sec_complexity}. However, if we start from the adjacency list of the input frame, we can select the monomials in $\mathcal{O}(n^2)$-time by searching the $1$'s only in the rows of $\left\{ix,jx,iz,jz\right\}$. After the symplectic transformation, with $P(S^{-1}(\bfa))$, the number of generated monomials is $\mathcal{O}(n^2)$ and summation to the initial frame takes $\mathcal{O}(n^2)$-time. This is because each generated monomial has at most $\mathcal{O}(1)$-number of rows, and we update the graph by flipping a corresponding location of a binary element in the adjacency list and by adjusting the degree of the vertex at the last column. In conclusion, the total time complexity of frame changing after a single Clifford gate takes $\mathcal{O}(n^2)$-time, which is lower than the time cost $\mathcal{O}(n^3)$ described in Sec.~\ref{supp:sec_complexity}. Given that the gate count is $\mathcal{O}(dn)$, total time complexity becomes $\mathcal{O}(dn^3)$.

Next, to obtain classically simulatable marginal qubits, let us consider the adjacency list of the final frame $F({\bf 0}_x, \bfa_z)$, where taking $\bfa_x = {\bf 0}_x$ is equivalent to removing all the vertices $\{ ix \}_{i=1}^n$, which takes $\mathcal{O}(n^3)$-time. We then proceed with the greedy algorithm to solve the vertex cover problem for the reduced graph with $V=\left\{1z,2z,\ldots,nz\right\}$. From the last column, we can find the vertex with the largest degree $v'$ in $\mathcal{O}(n)$-time. Eliminating each edge containing $v'$ from the adjacency list takes $\mathcal{O}(1)$-time because this edge can be found only once in $\mathcal{O}(1)$-number of rows. For each time the element is changed to zero, we should subtract the rightmost value of the given row by $1$. Therefore, setting all elements, corresponding to $v'$-th row or column $v'\in S$, to $0$ takes $\mathcal{O}(d(G))$-time. By repeating this procedure until all the elements of the last column become zero, which takes at most as many steps as the number of vertices $|V| = n$, we find the vertex cover. Hence, given the final frame $F(\mathbf{0}_x,\bfa_z)$, we note that the time complexity for the greedy algorithm to solve the vertex cover becomes $\mathcal{O}( n (n + d(G_{F}))) = \mathcal{O}(n^2+nd(G_{F}))$.

\section{More general frame formalism and Born probability estimation} \label{sup:secB}

\subsection{Wigner representation using multiple frames} \label{supsec6:A}

In this section, we further generalize the framed Wigner function formalism by taking multiple frame functions to represent a quantum state. In particular, we show that an arbitrary product state can be positively represented by using multiple quadratic frame functions.


We recall that a single qubit state can be expressed by either zero frame ($F=0$) or conjugate frame ($F=a_{1x}a_{1z})$. By combining the phase space operators corresponding to these two frames, one can express any single qubit states by the convex sum of these operators. Hence, if $\rho$ is a product state, it is positively represented under a set of frames $\mathcal{F}\equiv \left\{\bfb\cdot (a_{1x}a_{1z},a_{2x}a_{2z},\ldots,a_{nx}a_{nz})|\bfb\in\bfZ^n_2\right\}$.
    More precisely, for $\rho=\bigotimes_{i=1}^{n}\rho_i$, we express each $\rho_i$ as 
    \begin{align}
        \rho_i=\sum_{\bfu_i\in\bfZ^2_2}\left(w^0_{\rho_i}(\bfu_i)A^0(\bfu_i)+w^{a_{ix}a_{iz}}_{\rho_i}(\bfu_i)A^{a_{ix}a_{iz}}(\bfu_i)\right),
    \end{align}
    using eight phase space operators $\{ A^0(\bfu_i), A^{a_{ix} a_{iz}}(\bfu_i)\}$, where $\bfu_i\equiv(u_{ix},u_{iz}) \in \bfZ_2^2$. Here, $w^0$ and $w^{a_{ix}a_{iz}}$ are non-negative functions and are different from the Wigner functions $W_\rho^F(\bfu)$ in the main text. We note that summation over phase point of $w^0_{\rho_i}(\bfu_i)$, or $w^{a_{ix}a_{iz}}_{\rho_i}(\bfu_i)$ solely does not give unity, but $\sum_{\bfu_i}\left(w^0_{\rho_i}(\bfu_i)+w^{a_{ix}a_{iz}}_{\rho_i}(\bfu_i)\right)=1$. We also note that these two functions cannot be obtained by inversion in Eq. (2) of the main text because we now have eight phase point operators, which are overcomplete, and hence, their coefficients are not unique. However, we can still efficiently find these two functions because the convex polytope by 8-phase point operators as extreme points contain the Bloch sphere (see Ref.~\cite{raussendorf2017}). Hence, solving the constant-sized system of linear equations leads to the following expression of $\rho$,
    \begin{align}\label{sup:eqborn1}
        \rho&=\bigotimes_{i=1}^{n}\rho_i=\sum_{\bfb\in\bfZ^n_2}\sum_{\bfu\in V_{n}}\left(\prod_{i=1}^{n}w^{b_ia_{ix}a_{iz}}_{\rho_i}(\bfu_i)\right)\left(\bigotimes_{i=1}^{n}A^{b_ia_{ix}a_{iz}}(\bfu_i)\right),
    \end{align}
    where $\bfu=\bigoplus_{i=1}^{n}(\bfu_i)$. From the definition, $\bigotimes_{i=1}^{n}A^{b_ia_{ix}a_{iz}}(\bfu_i)$ is an $n$-qubit phase point operators with a frame function $F=\bfb\cdot (a_{1x}a_{1z},a_{2x}a_{2z},\ldots,a_{nx}a_{nz})$. Hence, we can rewrite Eq.~\eqref{sup:eqborn1} as
    \begin{align}
    \rho=\sum_{F\in\mathcal{F}}\sum_{\bfu\in V_{n}}\left(\prod_{i=1}^{n}w^{b_ia_{ix}a_{iz}}_{\rho_i}(\bfu_i)\right)(A^{F}(\bfu)).
    \end{align}
    Therefore, the desired Wigner function is 
    \begin{align}\label{sup:eqborn2}
        W^F_{\rho}(\bfu)=\prod_{i=1}^{n}w^{b_ia_{ix}a_{iz}}_i(\bfu_i).
    \end{align}
Although there could not be a unique expression, the same phase point sampling protocol can be applied to the generalized Wigner functions with multiple frames whenever they are non-negative. This is because the non-negative function $W^F_{\rho}$ is a probability distribution with random variables of not only $\bfu$ but also $F$, so that we can sample both $\bfu$ and $F$ from $W^{F}_{\rho}(\bfu)$. For this case, we say $\rho$ is positively represented under a \emph{frame set} $\mathcal{F}$. Since the Wigner function has a product form and for each $i\in[n]$, we sample $\bfu_i$ and $b_i$ from $W^{b_ia_{ix}a_{iz}}_i(\bfu_i)$. The resulting sampling outcome then becomes $\bfu=\bigoplus_{i}^{n}(\bfu_i)$ and $F=\bfb\cdot (a_{1x}a_{1z},a_{2x}a_{2z},\ldots,a_{nx}a_{nz})$.

\subsection{Generalization of weak simulation results}

With the above generalization of using multiple frames representation, we can also obtain a more general statement on the weak simulation. We will discuss this in detail in the next theorem.

\begin{theorem}\label{sup:thm1}
Suppose that we have a product state $\rho=\bigotimes_{i=1}^{n}\rho_i$ as an input and a non-adaptive Clifford operation $U$ and $Z$-measurements. Then $\rho$ is positively represented under a frame set $\mathcal{F}\equiv \left\{\bfb\cdot (a_{1x}a_{1z},a_{2x}a_{2z},\ldots,a_{nx}a_{nz})|\bfb\in\bfZ^n_2\right\}$. We denote a resulting frame function via the Clifford circuit starting from $F_{\rm in}\in \mathcal{F}$ as $F$. Also, let the $I\subset[n]$ satisfies that for all $F_{\rm in}\in \mathcal{F}$, $F(\bfa_x=\mathbf{0},\bfa_z)\big|_{a_{iz}=0\;{\rm for}\;i\notin I_F}$ is zero. We then can classically simulate $\left|I\right|$-number of measurements.
\end{theorem}

\begin{proof}
    We start from Eq.~\eqref{sup:change}. Probability to measure $\bfx\in\bfZ^n_2$ is

    \begin{align}
        {\rm Tr}(\ket{\bfx}\bra{\bfx}U\rho U^{\dagger})&=\sum_{\bfu,F_{\rm in}\in\mathcal{F},\bfa_z\in\bfZ^n_2}\frac{1}{2^n}W^{F_{\rm in}}_{\rho} (S^{-1}(\bfu)) \left(-1\right)^{(\bfu+\bfx)\cdot\bfa_z+F_{\rm in}(S^{-1}(\mathbf{0}_x,\bfa_z))+P(S^{-1}(\mathbf{0}_x,\bfa_z))} \\
        &=\sum_{\bfu,F_{\rm in}\in\mathcal{F},\bfa_z\in\bfZ^n_2}\frac{1}{2^n}W^{F_{\rm in}}_{\rho} (S^{-1}(\bfu)) \left(-1\right)^{((\bfu+\bfx)\cdot\bfa_z+F(\mathbf{0}_x,\bfa_z))}.
    \end{align}
    Now, consider measuring a subset $I$ of qubits. The probability of measuring a marginal string $\bfx'$ (with arbitrary qubit locations) is as follows. There exists a set of $\bfv_{F}\in V_n\;(F_{{\rm in}}\in \mathcal{F})$ such that

    \begin{align}
        p(\bfx')&=\sum_{\bfx''}{\rm Tr}(\ket{\bfx'\oplus\bfx''}\bra{\bfx'\oplus\bfx''}U\rho U^{\dagger}) \\
        &=\sum_{\bfu,F_{\rm in}\in\mathcal{F}}\left(\sum_{\substack{\bfa_z\in\bfZ^n_2\\a_{iz}=0\;{\rm for}\;i\in\left([n]\backslash I\right)}}\frac{1}{2^{\left|I\right|}}W^{F_{\rm in}}_{\rho} (S^{-1}(\bfu+\bfv_{F})) \left(-1\right)^{((\bfu_x+(\bfx'\oplus 0))\cdot\bfa_z)}\right) \\
        &=\sum_{\bfu,F_{\rm}\in\mathcal{F}}W^{F_{\rm in}}_{\rho} (S^{-1}((\bfu+\bfv_{F})))\prod_{i\in I}\delta_{x_i u_{ix}}.
    \end{align}
    Therefore, we obtain the following simulation scheme given that the conditions in Theorem~\ref{sup:thm1} for the final frame hold.
    \begin{enumerate}
        \item Sample a phase space point $\bfu \in V_n$ and $F_{\rm in}\in\mathcal{F}$ from $W^{F_{\rm in}}_\rho(\bfu)$.

        \item Change the input frame under the given Clifford operation and obtain both final frame $F$ and $\bfv_{F}$
        
        \item Update $\bfu$ $\leftarrow$ $\bfu'\equiv S(\bfu)$.

        \item Update $\bfu\leftarrow \bfu+\bfv_{F}$
        \item Desired outcome is a marginal string $\bfu'_x$.
    \end{enumerate}
\end{proof}

We can see that by Theorem~\ref{sup:thm1} (i), the larger the frame set for quantum state input we represent, the fewer the number of measurable qubits. For example, $n$-copies of \emph{equatorial state}, $\ket{E_{\phi}}\equiv\frac{1}{\sqrt{2}}\left(\ket{0}+e^{i \phi}\ket{1}\right)$, can have non-negative representation by $2^n$-numbers of frame (see Fig.~4 of Ref.~\cite{raussendorf2017} for single qubit case). Whereas, the $n$-copies of $\ket{A}= \cos (\theta/2) \ket{0} + e^{i (\pi/4)}\sin(\theta/2)\ket{1}$ with $\theta = \cos^{-1}(1/\sqrt{3})$~\cite{qassim2021} need only zero frame for non-negative representation. Interestingly, for approximate simulation, there exists an algorithm that marginally simulates a large fraction of circuits with $\ket{E_{\frac{\pi}{4}}}^{\otimes n}$ as an input efficiently but not for $\ket{A}$ state input \cite{bu2019}, which has Pauli rank $4$.  

\subsection{Born probability estimation}\label{supsec6:C}

We consider the Born probability estimation of a quantum circuit with outcome $\bfx \in \bfZ^n_2$ within additive error $\epsilon$. Suppose a given quantum state $\rho$ is positively represented under the frame $F_{{\rm in}}$. From Eq.~\eqref{sup:linear_ig}, we have
\begin{align}\label{supp:born}
    {\rm Tr}(U\rho U^\dagger \ket{\bfx}\bra{\bfx})
    =\frac{1}{2^n}\sum_{\bfu\in V_n,F\in \mathcal{F}}\sum_{\bfa_z\in\bfZ^n_2}W^{F_{\rm in}}_{\rho}(S^{-1}(\bfu+\bfv_F))(-1)^{(\bfu_x+\bfx)\cdot \bfa_z+F(\mathbf{0}_x,\bfa_z)},
\end{align}
for some $\bfv_F\in V_n$. The first method is to uniformly choose $\bfa_z$ and take an estimator,
\begin{align}
    \hat{p}(\bfx)=\sum_{\bfu\in V_n,F\in \mathcal{F}}W^{F_{\rm in}}_{\rho}(\bfu)(-1)^{(S(\bfu)_x+(\bfv_F)_x+\bfx)\cdot \bfa_z+F(\mathbf{0}_x,\bfa_z)}=\sum_{\bfa_z\in\bfZ^n_2}(-1)^{\bfa_z\cdot\bfx}\left({\rm Tr}\left(U\rho U^{\dagger}T_{(\mathbf{0},\bfa_z)}\right)\right).
\end{align}
This estimator can be simulated classically if $W^{F_{\rm in}}_{\rho}(\bfu)$ has a product form. However, we also have another expression,  
\begin{align}\label{sup:stabest}
    {\rm Tr}(U\rho U^\dagger \ket{\bfx}\bra{\bfx})
    &=\frac{1}{2^n}\sum_{\bfa_z\in\bfZ^n_2}(-1)^{\bfa_z\cdot\bfx}\left({\rm Tr}\left(U\rho U^{\dagger}T_{(\mathbf{0},\bfa_z)}\right)\right)=\frac{1}{2^n}\sum_{\bfa_z\in\bfZ^n_2}(-1)^{\bfa_z\cdot\bfx}\left({\rm Tr}\left(\rho U^{\dagger}T_{(\mathbf{0},\bfa_z)}U\right)\right).
\end{align}

Hence, we may just uniformly randomly choose $\bfa_z$ and find $T'_{(\mathbf{0}_x,\bfa_z)}\equiv U^{\dagger}T_{(\mathbf{0},\bfa_z)}U$ by using the  stabilizer tableau \cite{gottesman1998}, and then take an estimator $\hat{p}_s(\bfx)=(-1)^{\bfa_z\cdot\bfx}{\rm Tr}(T'_{(\mathbf{0}_x,\bfa_z)}\rho)$. Therefore, the estimators $\hat{p}(\bfx)$ and $\hat{p}_s(\bfx)$ have same value for given sampled variable $\bfa_z$. Hence, both estimators have the same mean squared error, which is, 
\begin{align}
    {\rm Var}_{{\rm Pauli}}(\bfx)&=\frac{1}{2^n}\sum_{\bfa_z\in\bfZ^n_2}\left({\rm Tr}\left(\rho U^{\dagger}T_{(\mathbf{0},\bfa_z)}U\right)\right)^2-p(\bfx)^2 \\
    &=\frac{1}{4^n}\sum_{\bfx\in\bfZ^n_2}\sum_{\bfa_z,\bfb_z\in\bfZ^n_2}(-1)^{(\bfa_z+\bfb_z)\cdot\bfx}\left({\rm Tr}\left(\rho U^{\dagger}T_{(\mathbf{0},\bfa_z)}U\right)\right)\left({\rm Tr}\left(\rho U^{\dagger}T_{(\mathbf{0},\bfb_z)}U\right)\right)-p(\bfx)^2. \\
    &=\left(\sum_{\bfx\in\bfZ^n_2}p(\bfx)^2\right)-p(\bfx)^2=Z_{U,\rho}-p(\bfx)^2,
\end{align}
where $Z_{U,\rho}\equiv\sum_{\bfx\in\bfZ^n_2}p(\bfx)^2$ is the so-called collision probability \cite{dalzell2022}.

The above schemes do not use the probabilistic property of the Wigner function. We introduce another estimation method by sampling phase points from the Wigner function. We enclose it in the following result.

\begin{theorem}\label{sup:thm2}
    Assume $\rho=\bigotimes_{i=1}^{n}\rho_i$ is a product state, which is positively represented under a frame set $\mathcal{F}\equiv \left\{\bfb\cdot (a_{1x}a_{1z},a_{2x}a_{2z},\ldots,a_{nx}a_{nz})|\bfb\in\bfZ^n_2\right\}$. Also, let $F_0$ be the resulting frame via a given circuit starting from a zero frame. Now, we assume that $I_{F_0}\subset [n]$ satisfies that $F_0(\bfa_x=\mathbf{0},\bfa_z)\big|_{a_{iz}=0\;{\rm for}\;i\notin I_{F_0}}$ is a quadratic polynomial, or has $\mathcal{O}(\ln(n))$-sized vertex cover of subgraph of $G_{F_0}$ having only third-degree terms. Then there exists an efficient algorithm for estimation of marginal measurement probability $p(\bfx')$, where $\bfx'$ is a target string on qubits located on $I_{F'_0}$. Also, averaged variance over uniform outcomes is $(1-Z^{(k)}_{U,\rho})/2^k$. Hence if we uniformly randomly choose a binary string $\bfx'$,  ${\rm Var}(\bfx')\le\frac{a(k)}{2^k}-p(\bfx')^2$ with probability at least $\left(1-\frac{1}{a(k)}\right)$, where $a(k)$ is a non-negative function of $k$.
\end{theorem}

\begin{proof}
    We first consider the Born probability estimation of full string $\bfx\in\bfZ^n_2$. In the same manner as Eq.\eqref{sup:linear_ig}, we obtain that 
    \begin{align}
    &{\rm Tr}(U\rho U^\dagger \ket{\bfx}\bra{\bfx})=\frac{1}{2^n}\sum_{\bfu\in V_{n},F_{\rm in}\in\mathcal{F}}\sum_{\bfa_z\in\bfZ^n_2}W^{F_{\rm in}}_{\rho}(S^{-1}(\bfu+\bfv_{F}))(-1)^{(\bfu_x+\bfx)\cdot \bfa_z+F(\mathbf{0}_x,\bfa_z)}.
    \end{align}

    Now, we set the Born probability estimation algorithm of $P(\bfx)\equiv {\rm Tr}(U\rho U^\dagger \ket{\bfx}\bra{\bfx})$. Here's the scheme.
    \begin{enumerate}
    
    \item Sample a phase space point $\bfu \in V_n$ and $F_{\rm in}\in\mathcal{F}$ from $W^{F_{\rm in}}_\rho(\bfu)$. 

    \item Change the input frame under the given Clifford operation and obtain both final frame $F$ and $\bfv_{F_{\rm in}}$

    \item Update $\bfu\leftarrow S(\bfu)$ and then update $\bfu\leftarrow \bfu+\bfv_{F}$
    \item Desired estimation value for each trial is $\hat{p}(\bfx)\equiv\frac{1}{2^n}\sum_{\bfa_z}(-1)^{(\bfu_x+\bfx)\cdot \bfa_z+F(\mathbf{0}_x,\bfa_z)}$, where $F$ is a 
    final frame starting from $F_{\rm in}$. 
    \item Repeat step 2 $\sim$ step 4 to obtain many $\hat{p}(\bfx)$'s. The final estimation will be the sample mean of those $\hat{p}(\bfx)$'s.
    \end{enumerate}

    Unfortunately, this is not an efficient algorithm. Because at the third stage, $F(\mathbf{0}_x,\bfa_z)$ is in general of third-order. The exact calculation is \#P-Hard problem \cite{bremner2016}. However, in the cases where the size of the vertex cover of a hypergraph with terms of third-degree in $F(\mathbf{0}_x,\bfa_z)$ is $\mathcal{O}(\ln(n))$, we can do this efficiently \cite{montanaro2017}. Therefore, if we do not estimate the probability of all measurements, we can trace some qubits until the resulting frame is such a form. This is a more relaxed condition than one of weak simulation in Section~\ref{supsec1:C}. 
    If the final frames (after tracing) become quadratic, then $\hat{p}(\bfx)$ becomes an exponential sum of quadratic binary polynomials, which is efficiently calculated in $\mathcal{O}(k^3)$-time \cite{bravyi2019}.
    
    For every initial second-ordered frame we sampled from, all resulting frames (after we take $\bfa_x=0$) have the same third-ordered terms. Because third-ordered terms are obtained only from phase functions of Clifford gates and symplectic transforms, which do not raise the order of the polynomial, therefore the vertex cover problem may be solved only once for the resulting frame starting from $F_{\rm in}=0$.

    Now, we only see the marginal outcome string $\bfx'$. The mean squared error ${\rm Var}(\bfx')$ is given as, 
    \begin{align}\label{sup:bornupper}
    {\rm Var}(\bfx')
    =\frac{1}{4^k}\sum_{\bfu\in V_{n},F_{\rm in}\in\mathcal{F}}\sum_{\bfa'_z,\bfb'_z}W_{\rho}^{F_{\rm in}}(S^{-1}(\bfu+\bfv_F))(-1)^{(\bfu'_x+\bfx')\cdot(\bfa'_z+\bfb'_z)+F(\mathbf{0}_x,\bfa'_z\oplus 0)+F(\mathbf{0}_x,\bfb'_z\oplus 0)}-p(\bfx')^2,
    \end{align}
    where $\sum_{\bfa'_z,\bfb'_z}$ is the sum over strings at which the same marginalization is applied as $\bfx'$. Let us denote the first term of the right side as $\mathbb{E}(\hat{p}(\bfx')^2)$. When we take the uniform average to this over binary strings $\bfx'$, 
    \begin{align}
    \overline{\mathbb{E}(\hat{p}(\bfx')^2)}^{\bfx'}=\frac{1}{2^k}\sum_{\bfu,F_{\rm in}}W_{\rho}^{F_{\rm in}}(S^{-1}(\bfu+\bfv_F))
    =\frac{1}{2^k}. 
    \end{align}
    (Note that $\overline{{\rm Var}(\bfx')}^{\bfx'}=1/2^k-Z^{(k)}_{U,\rho}/2^k$, where $Z^{(k)}_{U,\rho}\equiv \sum_{\bfx'\in\bfZ^k_2}p(\bfx')^2$.) Hence, by Markov's inequality, when we uniformly randomly sample $\bfx'$, the probability of $\mathbb{E}(\hat{p}(\bfx')^2)$ being larger than $\frac{a(k)}{2^k}$ is,
    \begin{align}
    {\rm Pr}\left(\mathbb{E}(\hat{p}(\bfx')^2)\ge\frac{a(k)}{2^k}\right)\le\frac{\overline{2^k\mathbb{E}(\hat{p}(\bfx')^2)}^{\bfx'}}{a(k)}=\frac{1}{a(k)}.
    \end{align}
    We note that ${\rm Var}(\bfx')=\mathbb{E}(\hat{p}(\bfx')^2)-p(\bfx')^2$. Hence if we uniformly randomly choose $\bfx'$, the probability of ${\rm Var}(\bfx')\le \frac{a(k)}{2^k}-p(\bfx')^2$ is at least $\left(1-\frac{1}{a(k)}\right)$. 
\end{proof}

Suppose that we use product state input. We need to find the final (and marginal) frame from each initial frame, which can be sampled in $\mathcal{O}(n)$-time. Moreover, all $\mathcal{O}(k^2)$ number of coefficients of second-ordered terms in a final (and marginal) frame can be rewritten by a boolean linear function with the argument $\bfb\in\bfZ^n_2$ which represents the input frame sample $F_{\rm in}=\sum_{\bfb\in\bfZ^n_2}b_ia_{ix}a_{iz}$. These functions can be found before the sampling. 
Therefore, the total time for each trial is $\mathcal{O}(nk^2)$.

Whereas, we can easily derive that from Eq.~\eqref{sup:stabest}, ${\rm Var}_{{\rm Pauli}}(\bfx')=Z^{(k)}_{U,\rho}-p(\bfx')^2$ and calculation of $\hat{p}_s(\bfx')$ takes $\mathcal{O}(n)$-time given that $T'_{(\mathbf{0}_x,\bfa'_z)}\equiv U^{\dagger}T_{(\mathbf{0},\bfa'_z)}U$ is known.

The mean squared error is connected to the required number of samples to achieve additive estimation error $\epsilon$ with probability larger than $1-\delta$, given by $\mathcal{O}\left(\frac{{\rm Var}(\bfx)}{\epsilon^2}\ln(\frac{1}{\delta})\right)$ \cite{jerrum1986,blair1985}. Suppose that the input state is a product state. Using Theorem~\ref{sup:thm2}, we note that for any choice of a non-negative function $a(k) \geq 1$ at least $2^k(1-\frac{1}{a(k)})$-number of binary strings $\bfx'$ can be estimated with $\mathcal{O}\left(\frac{a(k)/ 2^k-p(\bfx')^2}{\epsilon^2}\ln(\frac{1}{\delta})\right)$ samples using the Wigner function approach. In contrast, Ref.~\cite{pashayan2020} requires $\mathcal{O}\left(\frac{Z_{U,\rho}^{(k)}-p(\bfx')^2}{\epsilon^2}\ln(\frac{1}{\delta})\right)$ samples for any string $\bfx'$. The total time to simulate is the product of the sample number and the time taken to run the estimator once. From the above arguments, if the input state is a product (by ignoring the time for the first trial) we have a time improvement for $2^k(1-\frac{1}{a(k)})$-number of strings when the collision probability $Z_{U,\rho}^{(k)}\ge\mathcal{O}(\frac{a(k)k^2}{2^k})$. The one of such cases is when $k\sim \beta n\; (\beta\in(0,1),\;n\;{\rm is}\;{\rm large})$ and the $k$-marginal probability distribution is far from \emph{anti-concentration}, $Z_{U,\rho}^{(k)}\sim \frac{1}{2^{\alpha k}}\; (\alpha\in (0,1))$ in which we take improvements for $2^k(1-\frac{1}{\poly(k)})$ number of strings.

\bibliographystyle{apsrev4-1}
\bibliography{ref}

\end{document}